\renewcommand{\maketag@@@}[1]{\hbox{\m@th\normalsize\normalfont#1}}%
\DeclareRobustCommand{\mhl}[1]{%
	\ifmmode\text{\rvp{$#1$}}\else\revisedpart{#1}\fi
}
\newcommand{\rvp}[1]{{\color{blue}{#1}}}
\newtheorem{thm}{Theorem}
\newtheorem{rem}{Remark}
\renewenvironment{proof}[1][\proofname]{%
	\par\pushQED{\qed}\normalfont%
	\topsep6\p@\@plus6\p@\relax
	\trivlist\item[\hskip\labelsep\bfseries#1\@addpunct{.}]%
	\ignorespaces
}{%
	\popQED\endtrivlist\@endpefalse
}
\newcolumntype{L}[1]{>{\raggedright\arraybackslash}p{#1}}
\newcolumntype{C}[1]{>{\centering\arraybackslash}p{#1}}
\newcolumntype{R}[1]{>{\raggedleft\arraybackslash}p{#1}}
\crefname{equation}{}{}
\crefname{figure}{Fig.}{Fig.}
\crefname{table}{Table}{Table}
\crefname{lemma}{Lemma}{Lemma}
\crefname{prop}{Proposition}{Proposition}
\crefname{thm}{Theorem}{Theorem}
\crefname{defn}{Definition}{Definition}
\crefname{rem}{Remark}{Remark}
\begin{document}
	
	\title{Network-Assisted Full-Duplex Cell-Free mmWave Networks: Hybrid MIMO Processing and Multi-Agent DRL-Based Power Allocation}

	
\author{~Qingrui~Fan,~Yu~Zhang,~\IEEEmembership{Member,~IEEE,}~Jiamin~Li,~\IEEEmembership{Member,~IEEE,}~Dongming~Wang,~\IEEEmembership{Member,~IEEE,}\\~Hongbiao~Zhang,~and~Xiaohu~You,~\IEEEmembership{Fellow,~IEEE}

    \thanks{\emph{Corresponding author: Jiamin Li (e-mail: jiaminli@seu.edu.cn)}}
	\thanks{This work was supported in part by the National Key R\&D Program of China under Grant 2021YFB2900300, by the National Natural Science Foundation of China (NSFC) under Grant 61971127, by the Southeast University-China Mobile Research Institute Joint Innovation Center, by the Major Key Project of PCL (PCL2021A01-2), and by the open research fund of National Mobile Communications Research Laboratory, Southeast University under Grant 2022D11.}
	\thanks{Q. Fan is with the National Mobile Communications Research Laboratory, Southeast University, Nanjing 210096, China (e-mail: qingruifan@qq.com).}
    \thanks{Y. Zhang is with the School of Electronic and Information Engineering, Nanjing University of Information Science and Technology, Nanjing 210044, China, and also with the National Mobile Communications Research Laboratory, Southeast University, Nanjing, 210096 (email: zhangyu@nuist.edu.cn).}
    \thanks{J. Li, D. Wang and X. You are with the National Mobile Communications Research Laboratory, Southeast University, Nanjing 210096, China, and also with the Purple Mountain Laboratories, Nanjing 211111, China (e-mail: jiaminli, wangdm, xhyu@seu.edu.cn).}
	\thanks{H. Zhang is with China Mobile Research Institute, Beijing, China (email: zhanghongbiao@chinamobile.com).}
	}

	\maketitle
	
	\begin{abstract}
This paper investigates the network-assisted full-duplex (NAFD) cell-free millimeter-wave (mmWave) networks, where the distribution of the transmitting access points (T-APs) and receiving access points (R-APs) across distinct geographical locations mitigates cross-link interference, facilitating the attainment of a truly flexible duplex mode. 
To curtail deployment expenses and power consumption for mmWave band operations, each AP incorporates a hybrid digital-analog structure encompassing precoder/combiner functions.
However, this incorporation introduces processing intricacies within channel estimation and precoding/combining design.
In this paper, we first present a hybrid multiple-input multiple-output (MIMO) processing framework and derive explicit expressions for both uplink and downlink achievable rates. 
Then we formulate a power allocation problem to maximize the weighted bidirectional sum rates.
To tackle this non-convex problem, we develop a collaborative multi-agent deep reinforcement learning (MADRL) algorithm called multi-agent twin delayed deep deterministic policy gradient (MATD3) for NAFD cell-free mmWave networks.
Specifically, given the tightly coupled nature of both uplink and downlink power coefficients in NAFD cell-free mmWave networks, the MATD3 algorithm resolves such coupled conflicts through an interactive learning process between agents and the environment.
Finally, the simulation results validate the effectiveness of the proposed channel estimation methods within our hybrid MIMO processing paradigm, and demonstrate that our MATD3 algorithm outperforms both multi-agent deep deterministic policy gradient (MADDPG) and conventional power allocation strategies.
	\end{abstract}
	
	\begin{IEEEkeywords}
		Cell-free mmWave networks, network-assisted full-duplex, hybrid MIMO, multi-agent deep reinforcement learning.
	\end{IEEEkeywords}

	%
	\IEEEpeerreviewmaketitle

	\section{Introduction}
	%
	%
	%
	%
	\IEEEPARstart
{A}s a key physical layer technology, cell-free massive multiple-input multiple-output (MIMO) system was initially proposed by Ngo \emph{et al.} in \cite{ngo2017cell} and has been extensively studied in recent years. 
In contrast to conventional cell-based architecture, cell-free massive MIMO systems deploy numerous access points (APs) that are distributed randomly across a wide area and connected to a central processing unit (CPU) via fronthaul links to jointly serve a much smaller number of user devices. 
As a result, cell-free massive MIMO systems not only eliminate the concept of cell and the problem of cell-edge user rate in cell-based architecture but also inherit the advantages of channel hardening and good propagation effect in classical centralized MIMO systems \cite{9810259,8676377}.

Furthermore, the investigation of millimeter-wave (mmWave) communication combined with cell-free massive MIMO architecture emerges as the pivotal technology for the advancements in fifth-generation (5G) and beyond-5G (B5G) systems \cite{femenias2019cell,8815888,9786576,9609088}.
The utilization of mmWave communication addresses the scarcity of wireless communication spectrum resources, a challenge that has emerged due to the rapid growth of mobile data rates and the proliferation of terminals required by the Internet of Things (IoT) \cite{9947028, 9915296}.
The benefits of mmWave communication lie in its short wavelength, which enables the installation of numerous antenna elements within a confined space. This capability is supported by contemporary low-power complementary metal oxide semiconductor (CMOS) radio frequency (RF) circuit technology.
Meanwhile, the mmWave carrier frequency enables a broader bandwidth allocation, resulting in elevated data transmission rates~\cite{rappaport2013millimeter, 8464682}. 
Hybrid analog-digital architectures are widely employed to counteract the elevated cost and power consumption associated with conventional full RF chain configurations, and the related operations are denoted as hybrid MIMO processing~\cite{heath2016overview, ni2017near}.
However, mmWave communication suffers from critical drawbacks, such as its significant susceptibility to obstacles and serious path loss, which result in communication outages and unstable quality of service (QoS). 

To satisfy the diverse QoS requisites and accommodate the need for asymmetric data flow within the aforementioned system, researchers have extensively studied co-time co-frequency full-duplex (CCFD) technology in recent years~\cite{sabharwal2014band, nguyen2020spectral}. 
This technology aims to enhance system spectral efficiency by potentially doubling that of half-duplex communication mode through simultaneous uplink and downlink transmission within the same time-frequency resource block. 
However, when the CCFD mode is extended to ultra-dense networks like cell-free massive MIMO systems, the existence of cross-link interference (CLI), including inter-user interference (IUI) and inter-AP interference, seriously damages the system performance. 
Thus, the concept of network-assisted full-duplex (NAFD) was originally introduced to jointly integrate half-duplex, CCFD, and hybrid duplex modes in a unified manner~\cite{wang2019performance}. 
In the NAFD system, each AP is segregated into a transmitting AP (T-AP) and a receiving AP (R-AP), physically partitioned to mitigate the self-interference that can arise within the CCFD framework. 
However, the presence of CLI still limits the spectral efficiency of the NAFD cell-free massive MIMO system~\cite{mohammadi2023network}. 
Hence, how to eliminate CLI is an important issue in the research aimed at enhancing the performance of NAFD cell-free mmWave networks. 

There have been numerous research efforts focused on eliminating CLI in recent years. 
For the elimination of the IUI in the NAFD cell-free massive MIMO system, a user scheduling strategy based on a genetic algorithm is proposed in \cite{wang2019performance} to alleviate the uplink-to-downlink interference. Fukue \emph{et al.} propose a joint resource allocation and beamforming optimization scheme, leveraging location-aided channel estimation to mitigate IUI in the mmWave frequency band~\cite{10048919}. 
To further improve the spectral efficiency of the NAFD system, Xia \emph{et al.} design the downlink sparse beamforming and uplink power control approach to reduce CLI while accommodating limited backhaul capacity in NAFD systems \cite{xia2020joint}. 
Furthermore, Li \emph{et al.} propose an equivalent channel estimation scheme based on beamforming training and an efficient power allocation scheme that only relies on slow-changing large-scale fading information~\cite{li2020network}. Moreover,  \cite{li2023network} and \cite{fan2022maddpg} serve as the foundational basis for this study. 
They tackle the intertwined power allocation problem in NAFD cell-free mmWave massive MIMO systems using convex optimization and multi-agent deep reinforcement learning (MADRL) algorithms, in scenarios where perfect channel state information (CSI) is known, respectively.

Motivated by the aforementioned research, this paper explores a hybrid MIMO processing framework aimed at both uplink and downlink signal transmission, as well as bidirectional power allocation in NAFD cell-free mmWave networks. 
The main contributions of this article are as follows
   \begin{itemize}
       \item We propose a series of hybrid MIMO processing designs to eliminate the effects of CLI in our NAFD cell-free mmWave networks, which include uplink and downlink equivalent channel estimation, inter-AP interference channel estimation, and hybrid digital-analog precoding/combining.

       \item  Within this hybrid MIMO processing framework, we formulate an optimization problem for joint power allocation of uplink and downlink under realistic power constraints to enhance the desired signal while reducing the impact of CLI. The aim is to maximize the bidirectional sum rates and further improve the system's performance.
       
       \item To solve the non-convex coupled power allocation problem, we develop a novel MADRL algorithm with each user acting as an agent interacting with the environment instead of the traditional convex optimization approach which involves high computational overhead in \cite{li2023network}. The multi-agent deep deterministic policy gradient (MADDPG) and multi-agent twin delayed deep deterministic policy gradient (MATD3) algorithms are designed to find the optimal policy to maximize long-term rewards.
       
       \item Numerical simulations are carried out to validate the efficacy of the proposed channel estimation schemes and to assess the convergence of the MATD3 algorithm. In addition, the simulation results demonstrate that our MATD3 scheme outperforms the MADDPG algorithm and other conventional power allocation schemes when solving the complex bidirectional power allocation problem.
   \end{itemize}

The remaining part of this paper is organized as follows. Section II formulates the system model. Section III proposes the hybrid MIMO processing for NAFD cell-free mmWave networks. Section IV analyses the bidirectional spectral efficiency. Section V formulates the power allocation problem and develops the MADRL-based algorithm to handle the non-convex optimization problem. section VI presents and analyzes the simulation results. Section VII concludes this paper.

\textbf{Notations}: $a$, $\mathbf{a}$, $\mathbf{A}$ denotes a scalar, vector, and matrix. $\mathbb{C}^{m\times n}$ denotes the dimension of complex matrices $m\times n$. $|| \mathbf{A} ||_F$, $\operatorname{Tr}(\mathbf{A})$, $\mathbf{A}^{\mathrm{T}}$, $\mathbf{A}^{\mathrm{H}}$ and $\mathbf{A}^{-1}$ denote the Frobenius norm, trace, transpose, conjugate transpose and inverse of matrix $\mathbf{A}$, respectively. $\mathbf{I}_M$ represents an identity matrix whose dimension is $M \times M$. $\otimes$ denotes the Kronecker product calculation. $\mathbb{E}\{ \cdot \}$ denotes the expectation operator and vec($\mathbf{A}$) denotes the vector that stacks all the columns of matrix $\mathbf{A}$.
	

	\section{System Model}
Fig. \ref{system_model} illustrates the architecture of NAFD cell-free mmWave network which comprises $N_T$ T-APs, $N_R$ R-APs, $J$ uplink users, and $K$ downlink users where every T-AP and R-AP is equipped with a uniform linear array (ULA) of $\mathrm{N_{AP}}$ antennas and $\mathrm{N_{RF}}$ RF chains and every user just has one single antenna.
     \begin{figure}[!b]
    \centering
    \includegraphics[width=8.5cm]{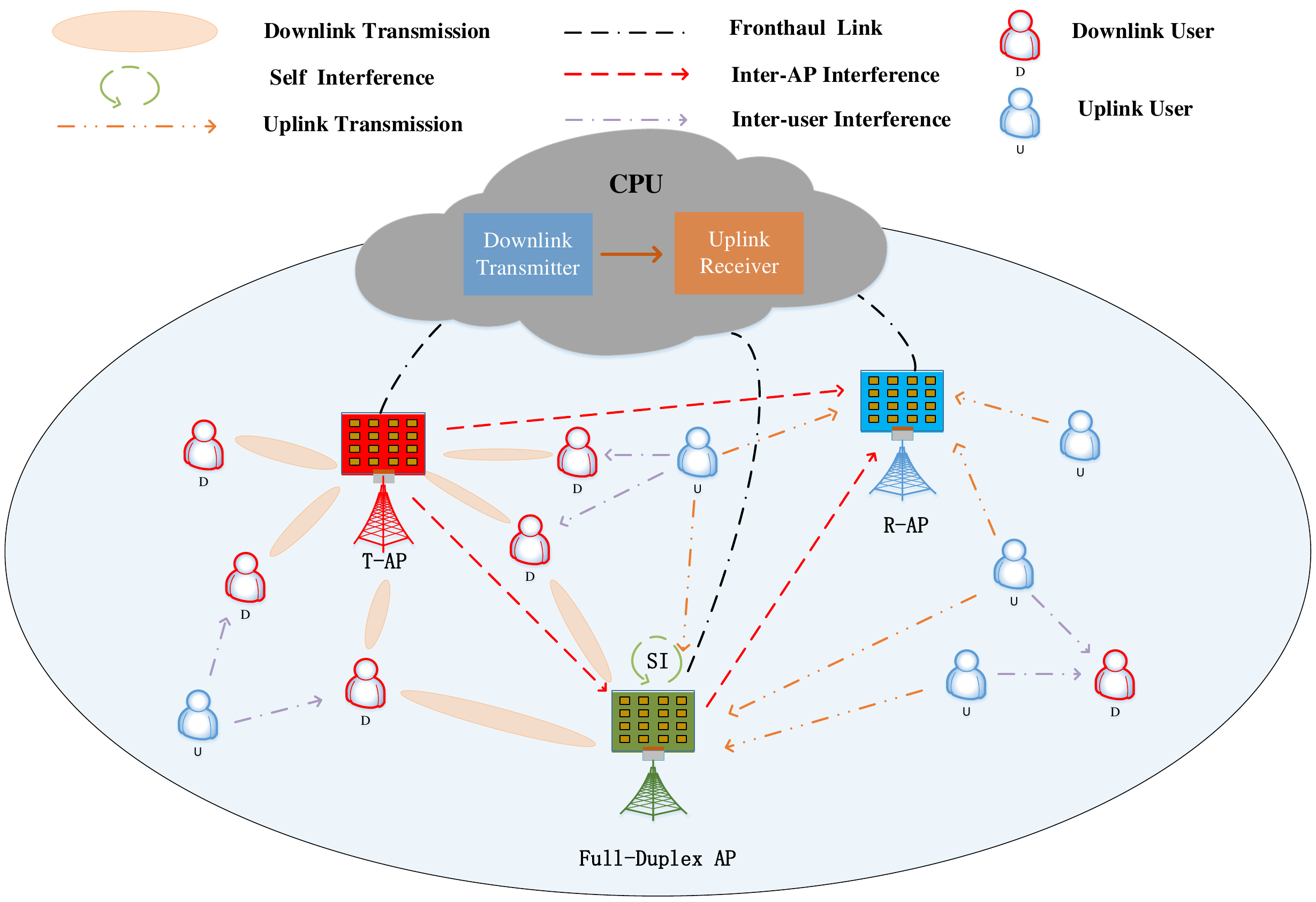}
    \caption{{The framework of NAFD cell-free mmWave networks.}\label{system_model}}
    \end{figure}

Then we model the uplink and downlink mmWave communication channels, inter-AP interference channels, and IUI in the NAFD system, respectively. Firstly, the mmWave downlink channel from the $m$-th T-AP to the $k$-th downlink user with $L$ isotropic paths based on the classical Saleh-Valenzuela model \cite{zhang2020hybrid} is defined as 
\begin{align}
\mathbf{h}_{k, m}^\mathrm{H}=\sum_{l=1}^{L} \alpha_{k, m, l} \mathbf{v}_\mathrm{AP}^\mathrm{H}\left(\theta_{k, m, l}^{\mathrm{AP}}\right)\in\mathbb{C}^{1\times \mathrm{N_{AP}}},
\end{align}
 where $\theta_{k, m, l}^{\mathrm{AP}}$ denotes the angle of departure (AoD) of the $l$-th path between the $m$-th T-AP and the $k$-th downlink user with its range $[-\pi,\pi]$. $\alpha_{k, m, l}$ represents the complex gain which follows the complex gaussian distribution $\mathcal{CN}(0,\beta_{k,m})$ and $\beta_{k,m}$ is the large-scale fading factor. The steering vector of T-AP's ULA can be formulated as $\mathbf{v}_{\mathrm{AP}}\left(\theta_{k, m, l}^{\mathrm{AP}}\right) \triangleq \frac{1}{\sqrt \mathrm{N_{AP}}}\big[1, e^{j \kappa_0  d \sin (\theta_{k, m, l}^{\mathrm{AP}})}, ... , e^{j \kappa_0  d ( \mathrm{N_{AP}} -1) \sin (\theta_{k, m, l}^{\mathrm{AP}})}\big]^{T}$,  where $\kappa_0 = 2\pi / \lambda_c$ with $\lambda_c$ being the mmWave wavelength and $d=\lambda_{c} / 2$ is the distance between adjacent antenna elements. 

 Secondly, the mmWave uplink channel from the $j$-th uplink user to the $z$-th R-AP with $L$ isotropic paths is modeled as
 	\begin{align}\label{}
		\mathbf{g}_{j, z}=\sum_{l=1}^{L} \alpha_{j, z, l} \mathbf{v}_\mathrm{AP}\left(\theta_{j, z, l}^{\mathrm{AP}}\right)\in\mathbb{C}^{\mathrm{N_{AP}}\times 1},    	
	\end{align}
 where $\theta_{j, z, l}^{\mathrm{AP}}$ denotes the angle of arrival (AoA) of the $l$-th path between the $j$-th uplink user and the $z$-th R-AP with its range $[-\pi,\pi]$. $\alpha_{j, z, l}$ denotes the complex gain which satisfies that $\alpha_{j, z, l} \sim \mathcal{CN}(0,\beta_{j,z})$ and $\beta_{j,z}$ is the large-scale fading coefficient. The steering vector of R-AP's ULA $\mathbf{v}_\mathrm{AP}\left(\theta_{j, z, l}^{\mathrm{AP}}\right)$ is designed similar to $\mathbf{v}_{\mathrm{AP}}\left(\theta_{k, m, l}^{\mathrm{AP}}\right)$. 

 Then the inter-AP interference channel matrix between the $m$-th T-AP and the $z$-th R-AP can be modeled as
	\begin{align}\label{}
		\mathbf{H}_{m, z}^\mathrm{AP}=\sum_{l=1}^{L} \alpha_{m, z, l}  \mathbf{v}_\mathrm{AP}\left(\theta_{m, z, l}^{r}\right) & \mathbf{v}_\mathrm{AP}^{\mathrm{H}}\left(\theta_{m, z, l}^{t}\right)\in\mathbb{C}^{ \mathrm{N_{AP}}\times  \mathrm{N_{AP}}},
	\end{align}
where $ \mathbf{v}_\mathrm{AP}\left(\theta_{m, z, l}^{r}\right)$ and $ \mathbf{v}_\mathrm{AP}\left(\theta_{m, z, l}^{t}\right)$ represent the ULA steering vector of R-AP and T-AP with their definitions similar to $\mathbf{v}_{\mathrm{AP}}\left(\theta_{k, m, l}^{\mathrm{AP}}\right)$. $\theta_{m, z, l}^{r}$ and $\theta_{m, z, l}^{t}$ are the AoA and AoD of the $l$-th path whose angle range is $[-\pi,\pi]$. $\alpha_{m, z, l}$ denotes the complex gain of the $l$-th path following $\mathcal{CN}(0,\beta_{m, z})$ with $\beta_{m, z}$ being the large-scale fading coefficient. $\mathbf{R}_{m,z}^\mathrm{AP}$ represents the covariance matrix of $\mathrm{vec}(\mathbf{H}_{m, z}^\mathrm{AP})$ calculated by (\ref{R_m,z}).
	\begin{figure*}[!bp]
	\hrulefill
			\begin{align}\label{R_m,z}
\mathbf{R}_{m,z}^{AP}&= \mathbb{E}\left\{\Vert \mathrm{vec}(\mathbf{H}_{m, z}^{AP})\Vert^2\right\}\nonumber\\
		&=\mathbb{E}\Bigg\{\sum_{l=1}^L \alpha_{m, z, l} \mathrm{vec}\bigg(\mathbf{v}_\mathrm{AP}(\theta_{m, z, l}^{r})\mathbf{v}_\mathrm{AP}^{\mathrm{H}}(\theta_{m, z, l}^{t})\bigg)\nonumber \sum_{l'=1}^L \mathrm{vec}^\mathrm{H}\bigg(\mathbf{v}_\mathrm{AP}\left(\theta_{m, z, l'}^{r}\right)\mathbf{v}_\mathrm{AP}^{\mathrm{H}}\left(\theta_{m, z, l'}^{t}\right)\bigg) \alpha_{m, z, l'}^* \Bigg\}\nonumber\\
		&=\beta_{m,z}\sum_{l=1}^L\mathrm{vec}\bigg(\mathbf{v}_\mathrm{AP}(\theta_{m, z, l}^{r})\mathbf{v}_\mathrm{AP}^{\mathrm{H}}(\theta_{m, z, l}^{t})\bigg)\mathrm{vec}^\mathrm{H}\bigg(\mathbf{v}_\mathrm{AP}\left(\theta_{m, z, l}^{r}\right)\mathbf{v}_\mathrm{AP}^{\mathrm{H}}\left(\theta_{m, z, l}^{t}\right)\bigg).
			\end{align}
     \end{figure*}
 
At last, the IUI coefficient between the $j$-th uplink user and the $k$-th downlink user is modeled as
	\begin{align}\label{}
		t_{k, j} \sim \mathcal{CN}\left(0, \beta_{k, j}\right),
	\end{align}
where $\beta_{k, j}$ is the large-scale fading factor. 

 \section{Hybrid MIMO Processing for NAFD Cell-Free mmWave Networks}
 This section presents a framework of hybrid MIMO processing for NAFD cell-free mmWave networks, including the inter-AP interference channel estimation, the user-AP equivalent channel estimation, and hybrid digital-analog precoding/combining during both channel estimation and data transmission.
 
\subsection{Inter-AP Interference Channel Estimation}
In this subsection, we propose a minimum mean squared error (MMSE) channel estimator for the inter-AP interference to derive the subsequent accurate closed-form rate expressions. Take the channel between the $m$-th T-AP and the $z$-th R-AP for example, the pilot signal received by the $z$-th R-AP is modeled as
	\begin{align}\label{Y_z^AP}
		\mathbf{Y}_{z}^\mathrm{AP}=\sqrt{\rho_{\tau_{AP}}} \mathbf{U}_{z}^{\mathrm{H}} \mathbf{H}_{m, z}^\mathrm{AP} \mathbf{W}_{m} \boldsymbol{\Phi} +\mathbf{N}_\mathrm{AP},
	\end{align}
where $\rho_{\tau_{AP}}$ is the power of pilot training sequences, $\mathbf{U}_{z}\in\mathbb{C}^{ \mathrm{N_{AP}}} \times \mathrm{N_{RF}}$ and $\mathbf{W}_{m}\in\mathbb{C}^{ \mathrm{N_{AP}}} \times \mathrm{N_{RF}}$ represent the analog matrices of the $z$-th R-AP and the $m$-th T-AP respectively in the channel estimation stage. $\boldsymbol{\Phi}\in\mathbb{C}^{ \mathrm{N_{RF}}} \times \tau_{AP}$ is the orthogonal pilot matrix with $\boldsymbol{\Phi}\boldsymbol{\Phi}^\mathrm{H} = \mathbf{I}_{ \mathrm{N_{RF}}}$ and $\mathbf{N}_\mathrm{AP}\in\mathbb{C}^{\mathrm{N_{RF}}} \times \tau_{AP}$ models the additive white Gaussian noise (AWGN) matrix with its elements being independently and identically distributed (i.i.d) and following the distribution $\mathcal{CN}(0,\sigma_{\tau_{AP}}^2)$. After multiplying $\mathbf{Y}_{z}^\mathrm{AP}$ by $\boldsymbol{\Phi}^\mathrm{H}$, \eqref{Y_z^AP} can be reformulated as
	\begin{align}\label{tilde_Y}
		\tilde{\mathbf{Y}}_{z}^\mathrm{AP} &=\mathbf{Y}_{z}^\mathrm{AP}\boldsymbol{\Phi}^\mathrm{H} \nonumber \\ 
  & = \sqrt{\rho_{\tau_{AP}}} \mathbf{U}_{z}^{\mathrm{H}} \mathbf{H}_{m, z}^\mathrm{AP} \mathbf{W}_{m} +\mathbf{N}_\mathrm{AP}\boldsymbol{\Phi}^\mathrm{H},
	\end{align}
	where $\mathbf{N}_\mathrm{AP}\boldsymbol{\Phi}^\mathrm{H}\in\mathbb{C}^{\mathrm{N_{RF}}} \times  \mathrm{N_{RF}}$ also obeys the same distribution as $\mathbf{N}_\mathrm{AP}$ because of the orthogonality of $\boldsymbol{\Phi}$. Using the Kronecker product property $\mathrm{vec}(\mathbf{ABC}) = (\mathbf{C}^\mathrm{T}\otimes \mathbf{A})\mathrm{vec}(\mathbf{B})$, \eqref{tilde_Y} can be rewritten as
	\begin{align}\label{vec_Y}
		\mathrm{vec}(\tilde{\mathbf{Y}}_{z}^\mathrm{AP})= \sqrt{\rho_{\tau_{AP}}} \mathbf{A} \mathrm{vec}(\mathbf{H}_{m, z}^\mathrm{AP}) + \mathrm{vec}(\mathbf{N}),
	\end{align}
	where $\mathbf{A}=\mathbf{W}_{m}^\mathrm{T} \otimes \mathbf{U}_{z}^{\mathrm{H}}$, $\mathbf{N}=\mathbf{N}_\mathrm{AP}\boldsymbol{\Phi}^\mathrm{H}$ and $\mathrm{vec}(\mathbf{N})\sim\mathcal{CN}(\mathbf{0}, \sigma_{\tau_{AP}}^2\mathbf{I}_{ \mathrm{N_{RF}^2}})$.
 Based on the received pilot signal \eqref{vec_Y} and assuming that the AP is aware of the channel statistics, the channel estimate for $\mathrm{vec}(\mathbf{H}_{m, z}^\mathrm{AP})$ can be obtained using MMSE estimation \cite{1597555} as shown in the following Theorem.

\begin{thm}
    The MMSE channel estimate of $\mathrm{vec}(\mathbf{H}_{m, z}^\mathrm{AP})$ is given by
    	\begin{align}\label{}
		\mathrm{vec}(\hat{\mathbf{H}}_{m, z}^\mathrm{AP}) &= \sqrt{\rho_{\tau_{AP}}} \mathbf{R}_{m,z}^\mathrm{AP} \mathbf{A}^\mathrm{H} \\
 &\times (\rho_{\tau_{AP}}\mathbf{A}\mathbf{R}_{m,z}^\mathrm{AP}\mathbf{A}^\mathrm{H} + \sigma_{\tau_{AP}}^2\mathbf{I}_{ \mathrm{N_{RF}^2}})^{-1} \mathrm{vec}(\tilde{\mathbf{Y}}_{z}^\mathrm{AP}), \nonumber
	\end{align} 
where the optimal coupling matrix $\mathbf{A}$ is designed as
	\begin{align}\label{optimal}
		\mathbf{A}=\mathbf{\Sigma}_A\mathbf{U}_R^\mathrm{H},
	\end{align}
where $\mathbf{\Sigma}_A$ is obtained by solving a typical water-filling problem and $\mathbf{U}_R$ is calculated from the eigenvalue decomposition $\mathbf{R}_{m,z}^\mathrm{AP} = \mathbf{U}_R \mathbf{\Lambda}_R \mathbf{U}_R^\mathrm{H}$.
\end{thm}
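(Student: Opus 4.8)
The plan is to handle the statement in two independent pieces: the closed form of the MMSE estimate for a given coupling matrix $\mathbf{A}$, and then the optimal choice of $\mathbf{A}$. For the first piece I would simply specialize the standard Bayesian linear MMSE result to the linear Gaussian observation model in \eqref{vec_Y}. Writing $\mathbf{h}\triangleq\mathrm{vec}(\mathbf{H}_{m,z}^{\mathrm{AP}})\sim\mathcal{CN}(\mathbf{0},\mathbf{R}_{m,z}^{\mathrm{AP}})$, $\mathbf{y}\triangleq\mathrm{vec}(\tilde{\mathbf{Y}}_{z}^{\mathrm{AP}})$, and $\mathbf{n}\triangleq\mathrm{vec}(\mathbf{N})\sim\mathcal{CN}(\mathbf{0},\sigma_{\tau_{AP}}^{2}\mathbf{I}_{\mathrm{N_{RF}^2}})$ independent of $\mathbf{h}$, the orthogonality principle gives $\hat{\mathbf{h}}=\mathbf{R}_{hy}\mathbf{R}_{yy}^{-1}\mathbf{y}$ with $\mathbf{R}_{hy}=\mathbb{E}\{\mathbf{h}\mathbf{y}^{\mathrm{H}}\}=\sqrt{\rho_{\tau_{AP}}}\,\mathbf{R}_{m,z}^{\mathrm{AP}}\mathbf{A}^{\mathrm{H}}$ and $\mathbf{R}_{yy}=\mathbb{E}\{\mathbf{y}\mathbf{y}^{\mathrm{H}}\}=\rho_{\tau_{AP}}\mathbf{A}\mathbf{R}_{m,z}^{\mathrm{AP}}\mathbf{A}^{\mathrm{H}}+\sigma_{\tau_{AP}}^{2}\mathbf{I}_{\mathrm{N_{RF}^2}}$; substituting these two expectations yields exactly the displayed estimator, and joint Gaussianity guarantees that this linear estimator is in fact the global MMSE estimator.

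For the second piece I would work with the resulting error covariance $\mathbf{\Xi}(\mathbf{A})=\mathbf{R}_{m,z}^{\mathrm{AP}}-\rho_{\tau_{AP}}\mathbf{R}_{m,z}^{\mathrm{AP}}\mathbf{A}^{\mathrm{H}}\mathbf{R}_{yy}^{-1}\mathbf{A}\mathbf{R}_{m,z}^{\mathrm{AP}}$, which by the Woodbury identity equals $\big((\mathbf{R}_{m,z}^{\mathrm{AP}})^{-1}+c\,\mathbf{A}^{\mathrm{H}}\mathbf{A}\big)^{-1}$ with $c\triangleq\rho_{\tau_{AP}}/\sigma_{\tau_{AP}}^{2}$. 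The design objective is $\mathrm{Tr}\big(\mathbf{\Xi}(\mathbf{A})\big)$, to be minimized over $\mathbf{A}$ subject to a training-power/norm budget $\mathrm{Tr}(\mathbf{A}^{\mathrm{H}}\mathbf{A})\le P$ (the relaxation of the constant-modulus/Kronecker structure imposed on $\mathbf{A}$ by the analog stages $\mathbf{W}_{m},\mathbf{U}_{z}$). Since $\mathbf{A}$ enters only through $\mathbf{B}\triangleq\mathbf{A}^{\mathrm{H}}\mathbf{A}\succeq\mathbf{0}$, I would use the eigendecomposition $\mathbf{R}_{m,z}^{\mathrm{AP}}=\mathbf{U}_{R}\mathbf{\Lambda}_{R}\mathbf{U}_{R}^{\mathrm{H}}$ together with a matrix-analytic lemma (either a majorization / Ky Fan-type argument, or directly the Lagrange stationarity condition $-c(\mathbf{R}_{m,z}^{\mathrm{AP},-1}+c\mathbf{B})^{-2}+\nu\mathbf{I}=\mathbf{0}$, which forces $\mathbf{B}$ to be a function of $\mathbf{R}_{m,z}^{\mathrm{AP}}$) to conclude that the minimizing $\mathbf{B}$ commutes with $\mathbf{R}_{m,z}^{\mathrm{AP}}$, i.e., $\mathbf{B}=\mathbf{U}_{R}\mathbf{D}\mathbf{U}_{R}^{\mathrm{H}}$ for a diagonal $\mathbf{D}$. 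Equivalently $\mathbf{A}=\mathbf{\Sigma}_{A}\mathbf{U}_{R}^{\mathrm{H}}$, the left singular-vector factor of $\mathbf{A}$ being immaterial. Substituting back collapses the objective to $\sum_{i}\big(\lambda_{i}^{-1}+c\,[\mathbf{\Sigma}_{A}^{\mathrm{H}}\mathbf{\Sigma}_{A}]_{ii}\big)^{-1}$ under $\sum_{i}[\mathbf{\Sigma}_{A}^{\mathrm{H}}\mathbf{\Sigma}_{A}]_{ii}\le P$; its KKT conditions give the classical water-filling allocation for the diagonal entries of $\mathbf{\Sigma}_{A}$ (with the low-rank structure of $\mathbf{A}$ restricting the allocation to the dominant eigen-directions in an Eckart--Young fashion).

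The main obstacle is the structural step of the second piece: proving that the optimal $\mathbf{A}$ aligns its right singular subspace with the eigenvectors of $\mathbf{R}_{m,z}^{\mathrm{AP}}$ and diagonalizes the error covariance. Everything else — the Gaussian MMSE formula, the Woodbury rewriting, and the final scalar water-filling — is routine once that alignment is established. I would also add a brief remark that reinstating the true constant-modulus and Kronecker constraints on $\mathbf{A}=\mathbf{W}_{m}^{\mathrm{T}}\otimes\mathbf{U}_{z}^{\mathrm{H}}$ makes the alignment only approximate, so $\mathbf{\Sigma}_{A}\mathbf{U}_{R}^{\mathrm{H}}$ is best read as the target that the analog matrices are designed to approximate.
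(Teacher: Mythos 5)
Your proposal is correct and follows essentially the same route as the paper's Appendix A: the standard Gaussian linear MMSE formula for the model \eqref{vec_Y}, the matrix-inversion-lemma rewriting of the error covariance, trace minimization under a power budget with the right singular vectors of $\mathbf{A}$ aligned to $\mathbf{U}_R$, and a scalar water-filling solved by KKT, followed by the Kronecker-product (Eckart--Young) approximation of $\mathbf{A}$ by $\mathbf{W}_m^{\mathrm{T}}\otimes\mathbf{U}_z^{\mathrm{H}}$. The only difference is that the alignment step you flag as the main obstacle is handled in the paper by invoking the analogous result in the cited hybrid-processing reference rather than by a self-contained majorization or stationarity argument.
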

\begin{proof}
Please refer to Appendix A.
\end{proof}

\begin{rem}
    In NAFD cell-free mmWave networks, the inter-AP interference channel is quasi-static due to the mostly stationary positions of the T-APs and R-APs, while the frequency of the user-AP channel estimation is higher than that of the inter-AP channel estimation due to the mobility of the uplink and downlink users themselves.
\end{rem}

\subsection{User-AP Channel Estimation}
\subsubsection{Downlink Channel Estimation}\label{equivalent_down} 
The specific derivation of the downlink channel estimation from the $m$-th T-AP to the $k$-th downlink user is as follows. The received pilot signal at the $m$-th T-AP can be modeled as
	\begin{align}\label{}
\mathbf{Y}_{m}^{\tau_d}&=\sum_{k=1}^{K}\sqrt{\rho_{\tau_d}} (\mathbf{W}_{m}^\mathrm{RF})^\mathrm{H} \mathbf{h}_{k,m} \boldsymbol{\varphi}_k^\mathrm{T} +\mathbf{N}_{m}^{\tau_d}, \nonumber \\
&=\sum_{k=1}^{K}\sqrt{\rho_{\tau_d}} \mathbf{h}_{k,m}^{eq} \boldsymbol{\varphi}_k^\mathrm{T} +\mathbf{N}_{m}^{\tau_d},
	\end{align}
	where $\mathbf{h}_{k,m}^{eq}\triangleq(\mathbf{W}_{m}^\mathrm{RF})^\mathrm{H} \mathbf{h}_{k,m}$ is defined as the downlink equivalent channel. $\boldsymbol{\varphi}_k \in\mathbb{C}^{\tau_d \times 1}$ is the pilot sequence assigned to the $k$-th downlink user and $\tau_d$ indicates the length of the pilot sequence which is assumed to be $\tau_d \geq K$ so that there is no effect of pilot contamination. $\rho_{\tau_d}$ denotes the total transmit power of the pilot training sequence for all the downlink users and $\mathbf{N}_{m}^{\tau_d}\in\mathbb{C}^{\mathrm{N_{RF}}} \times \tau_d$ is modeled as the AWGN matrix where all entries are i.i.d that follow $\mathcal{CN}(0,\sigma_{\tau_d}^2)$. 
 $\mathbf{W}_{m}^\mathrm{RF}\in\mathbb{C}^{\mathrm{N_{AP}}} \times \mathrm{N_{RF}}$ represents the analog beamforming matrix of the $m$-th T-AP and is designed based on the SLNRmax-SA method \cite{hong2021effect} which satisfies its constant-modulus constraint.

Specifically, it is assumed that the analog precoder of the $m$-th T-AP $\mathbf{W}_{m}^{\mathrm{RF}}$ is solely dependent on the spatial channel covariance matrix $\mathbf{R}_{\mathbf{h}_{k,m}}$ and each AP is considered to have the ability to obtain $\mathbf{R}_{\mathbf{h}_{k,m}}$. The covariance matrix varies very slow over time and it can be easily obtained through spatial channel covariance estimation for hybrid analog-digital MIMO precoding architectures \cite{adhikary2014joint, park2018spatial, femenias2019cell}. In addition, the spatial channel covariance matrix $\mathbf{R}_{\mathbf{h}_{k,m}}$ can be calculated as follow based on the fact that different sub-paths are independent channels
	\begin{align}\label{}
		&\mathbf{R}_{\mathbf{h}_{k,m}}=\mathbb{E}\left\{\mathbf{h}_{k, m}\mathbf{h}_{k, m}^{\mathrm{H}}\right\}\nonumber\\
		&=\mathbb{E}\left\{\sum_{l=1}^L \mathbf{v}_\mathrm{AP}\left(\theta_{k, m, l}^{\mathrm {AP}}\right) \alpha_{k,m,l}^* \sum_{l'=1}^L \alpha_{k,m,l'} \mathbf{v}_\mathrm{AP}^{\mathrm{H}}\left(\theta_{k, m, l'}^{\mathrm{AP}}\right) \right\}\nonumber \\
		&=\beta_{k,m}\sum_{l=1}^L \mathbf{v}_\mathrm{AP}\left(\theta_{k, m, l}^{\mathrm{AP}}\right)\mathbf{v}_\mathrm{AP}^{\mathrm{H}}\left(\theta_{k, m, l}^{\mathrm{AP}}\right).
	\end{align}

Then we calculate the arithmetic average of multiple downlink users in preparation for the design of analog beamforming matrix $\mathbf{W}_{m}^\mathrm{RF}$, which is defined as  $\overline{\mathbf{R}}_{D,m} = \sum_{k=1}^K \mathbf{R}_{\mathbf{h}_{k,m}}/K$. Therefore, the design of $ \mathbf{W}_{m}^\mathrm{RF}$ is shown as follows
	\begin{align}\label{}
		\mathbf{W}_{m}^\mathrm{RF}=\big[e^{j\angle \mathbf{w}_{m,1}^\mathrm{RF}},\cdots,e^{j\angle \mathbf{w}_{m, \mathrm{N_{RF}}}^\mathrm{RF}}\big],
	\end{align}
where $\angle \mathbf{w}$ and $\mathbf{w}_{m,l}^\mathrm{RF}$ represent the phase angles of all the elements of $\mathbf{w}$ and the eigenvector corresponding to the $l$-th largest eigenvalue of $\overline{\mathbf{R}}_{D,m}$, respectively.
 
Since the equivalent channel has a lower dimension than the real channel between transmitting antennas and receiving antennas, estimating the equivalent channel enables the decrease of the estimation overhead and computational complexity. Considering the MMSE estimation principle, the downlink equivalent
channel $\mathbf{h}_{k,m}^{eq}$ can be estimated as
	\begin{align}\label{}
\hat{\mathbf{h}}_{k,m}^{eq}&=\sqrt{\rho_{\tau_d}}\mathbf{R}_{\mathbf{h}_{k,m}^{eq}}\mathbf{Q}_{D,k,m}^{-1}\mathbf{Y}_{m}^{\tau_d}\boldsymbol{\varphi}_k^*,
	\end{align}
	where 
 \begin{align}
 \mathbf{R}_{\mathbf{h}_{k,m}^{eq}} = \mathbb{E}\left\{\mathbf{h}_{k,m}^{eq} (\mathbf{h}_{k,m}^{eq})^\mathrm{H}\right\} = ( \mathbf{W}_{m}^\mathrm{RF})^\mathrm{H} \mathbf{R}_{\mathbf{h}_{k,m}}  \mathbf{W}_{m}^\mathrm{RF},
 \end{align}
 with $\mathbf{Q}_{D,k,m} = \sum_{k'=1}^K \rho_{\tau_d}\mathbf{R}_{\mathbf{h}_{k',m}^{eq}}\big|\boldsymbol{\varphi}_{k'}^\mathrm{T}\boldsymbol{\varphi}_{k}^*\big|^2 + \sigma_{\tau_d}^2\mathbf{I}_{ \mathrm{N_{RF}}}$. Further, the covariance matrices of the estimated channel $\hat{\mathbf{h}}_{k,m}^{eq}{}$ and its error $\tilde{\mathbf{h}}_{k,m}^{eq}$ can be written by
 \begin{subequations}
 \begin{align}
          \mathbf{R}_{\hat{\mathbf{h}}_{k,m}^{eq}}& = \rho_{\tau_d}\mathbf{R}_{\mathbf{h}_{k,m}^{eq}}\mathbf{Q}_{D,k,m}^{-1}\mathbf{R}_{\mathbf{h}_{k,m}^{eq}}^\mathrm{H},\\
     \mathbf{R}_{\tilde{\mathbf{h}}_{k,m}^{eq}} &= \mathbf{R}_{\mathbf{h}_{k,m}^{eq}} - \rho_{\tau_d}\mathbf{R}_{\mathbf{h}_{k,m}^{eq}}\mathbf{Q}_{D,k,m}^{-1}\mathbf{R}_{\mathbf{h}_{k,m}^{eq}}^\mathrm{H}.
 \end{align}
 \end{subequations}

     
\subsubsection{Uplink Channel Estimation}\label{equivalent_up}
Similar to the downlink channel estimation, for uplink channel estimation by all APs through their RF chains, simultaneous transmission of pilot sequences from all users to APs is required during the training phase.
For the channel vector between the $j$-th uplink user and the $z$-th R-AP, the equivalent channel estimation is derived as follows. 
Firstly, the pilot signal received by the $z$-th R-AP can be modeled as
	\begin{align}\label{}
\mathbf{Y}_{z}^{\tau_u}&=\sum_{j=1}^{J}\sqrt{\rho_{\tau_u}} ( \mathbf{U}_{z}^\mathrm{RF})^\mathrm{H} \mathbf{g}_{j, z} \boldsymbol{\varphi}_j^\mathrm{T} +\mathbf{N}_{z}^{\tau_u},\nonumber \\
&=\sum_{j=1}^{J}\sqrt{\rho_{\tau_u}} \mathbf{g}_{j,z}^{eq} \boldsymbol{\varphi}_j^\mathrm{T} +\mathbf{N}_{z}^{\tau_u},
	\end{align}
	where $\mathbf{g}_{j,z}^{eq}\triangleq( \mathbf{U}_{z}^\mathrm{RF})^\mathrm{H} \mathbf{g}_{j,z}$ is defined as the uplink equivalent channel. $\boldsymbol{\varphi}_j \in\mathbb{C}^{\tau_u \times 1}$ is the pilot sequence assigned to the $j$-th uplink user and the length of the pilot sequence $\tau_u \geq J$ is assumed to satisfy that the pilot sequences among uplink users are fully orthogonal with each other, which guarantees no pilot contamination. $\rho_{\tau_u}$ is the total power of the pilot training sequence for the uplink users. $\mathbf{N}_{z}^{\tau_u}\in\mathbb{C}^{ \mathrm{N_{RF}}} \times \tau_u$ is modeled as the AWGN matrix in which each element is i.i.d and obeys the complex Gaussian distribution $\mathcal{CN}(0,\sigma_{\tau_u}^2)$ with its variance $\sigma_{\tau_u}^2$. $\mathbf{U}_{z}^\mathrm{RF}\in\mathbb{C}^{\mathrm{N_{AP}}} \times \mathrm{N_{RF}}$ depicts the analog beamforming matrix of the $z$-th R-AP and is designed as follows
 	\begin{align}\label{}
		\mathbf{U}_{z}^\mathrm{RF}=\big[e^{j\angle \mathbf{u}_{z,1}^\mathrm{RF}},\cdots,e^{j\angle \mathbf{u}_{z, \mathrm{N_{RF}}}^\mathrm{RF}}\big],
	\end{align}
 where $\angle \mathbf{u}$ and $\mathbf{u}_{z,l}^\mathrm{RF}$ represent the phase angles of all the entries of $\mathbf{u}$ and the eigenvector corresponding to the $l$-th largest eigenvalue of $\overline{\mathbf{R}}_{U,z}$, respectively. In addition, $\overline{\mathbf{R}}_{U,z}$ is calculated as $\overline{\mathbf{R}}_{U,z} = \sum_{j=1}^J \mathbf{R}_{\mathbf{g}_{j,z}}/J$, where $\mathbf{R}_{\mathbf{g}_{j,z}}$ is represented by
 	\begin{align}\label{}
		\mathbf{R}_{\mathbf{g}_{j,z}}&=\mathbb{E}\left\{\mathbf{g}_{j, z}\mathbf{g}_{j, z}^{\mathrm{H}}\right\}\nonumber\\
		&=\mathbb{E}\left\{\sum_{l=1}^L \alpha_{j,z,l}\mathbf{v}_\mathrm{AP}\left(\theta_{j, z, l}^{\text {AP}}\right) \sum_{l'=1}^L \mathbf{v}_\mathrm{AP}^{\mathrm{H}}\left(\theta_{j, z, l'}^{\text {AP }}\right) \alpha_{j,z,l'}^*\right\}\nonumber \\
		&=\beta_{j,z}\sum_{l=1}^L \mathbf{v}_\mathrm{AP}\left(\theta_{j, z, l}^{\text {AP }}\right)\mathbf{v}_\mathrm{AP}^{\mathrm{H}}\left(\theta_{j, z, l}^{\text {AP }}\right).
	\end{align}

Then, based on the MMSE principle, the uplink equivalent channel $\mathbf{g}_{j,z}^{eq}$ can be estimated by
	\begin{align}\label{}
\hat{\mathbf{g}}_{j,z}^{eq}&=\sqrt{\rho_{\tau_u}}\mathbf{R}_{\mathbf{g}_{j,z}^{eq}}\mathbf{Q}_{U,j,z}^{-1}\mathbf{Y}_{z}^{\tau_u}\boldsymbol{\varphi}_j^*,
	\end{align}
	where 
 \begin{align}
     \mathbf{R}_{\mathbf{g}_{j,z}^{eq}} = \mathbb{E}\left\{\mathbf{g}_{j,z}^{eq} (\mathbf{g}_{j,z}^{eq})^\mathrm{H}\right\} = (\mathbf{U}_{z}^\mathrm{RF})^\mathrm{H} \mathbf{R}_{\mathbf{g}_{j,z}} \mathbf{U}_{z}^\mathrm{RF},
 \end{align} 
 with $\mathbf{Q}_{U,j,z} = \sum_{j'=1}^J \rho_{\tau_u}\mathbf{R}_{\mathbf{g}_{j',z}^{eq}} \big|\boldsymbol{\varphi}_{j'}^\mathrm{T}\boldsymbol{\varphi}_j^*\big|^2+ \sigma_{\tau_u}^2\mathbf{I}_{\mathrm{N_{RF}}}$. Further, we can also get the following statistical characteristics of the estimated channel $\hat{\mathbf{g}}_{jz}$ and error channel $\tilde{\mathbf{g}}_{jz}$ 
 \begin{subequations}
     \begin{align}
         \mathbf{R}_{\hat{\mathbf{g}}_{j,z}^{eq}} &= \rho_{\tau_u}\mathbf{R}_{\mathbf{g}_{j,z}^{eq}}\mathbf{Q}_{U,j,z}^{-1}\mathbf{R}_{\mathbf{g}_{j,z}^{eq}}^\mathrm{H},\\
           \mathbf{R}_{\tilde{\mathbf{g}}_{j,z}^{eq}} &= \mathbf{R}_{\mathbf{g}_{j,z}^{eq}} - \rho_{\tau_u}\mathbf{R}_{\mathbf{g}_{j,z}^{eq}}\mathbf{Q}_{U,j,z}^{-1}\mathbf{R}_{\mathbf{g}_{j,z}^{eq}}^\mathrm{H}.
     \end{align}
 \end{subequations}
\begin{rem}
    In NAFD cell-free mmWave networks, the analog beamforming matrix designs of T-APs and R-APs will be used for the derivation of the data transmission process due to the equivalent user-AP channel estimation. In order to avoid the design conflict of the analog beamforming matrix, we estimate the inter-AP channel between transmitting antennas and receiving antennas instead of the equivalent inter-AP channel.
\end{rem}

 \subsection{Data Transmission}
    \subsubsection{Downlink Data Transmission}
 The signal received by the $k$-th downlink user admits the following formulation
	\begin{align}\label{yd,k}
		y_{D,k}
		=&\sum_{m=1}^{N_T} \sum_{{i=1}}^{K} (\mathbf{h}_{k, m}^{eq})^\mathrm{H} \mathbf{f}_{m, {i}} \sqrt{\eta_{m, {i}}} s_{{i}}+\sum_{j=1}^{J} t_{k, j} \sqrt{P_{U, j}} s_{U, j}\nonumber\\
		&+{n_{D, k}},
	\end{align}
where $\mathbf{f}_{m, {i}}\in\mathbb{C}^{ \mathrm{N_{RF}}}\times1$ denotes the digital beamforming matrix. $s_{{i}}\sim\mathcal{CN}(0,1)$ and $\eta_{m, {i}}\geq0$ are the data symbol of the ${i}$-th downlink user and the power coefficient. The second term in \eqref{yd,k} models the IUI from other uplink users in which $s_{U,j}$ and $P_{U,j}$ are the data symbol and the transmitted power of the $j$-th uplink user. $n_{D,k}$ is the noise following the distribution $\mathcal{CN}(0,\sigma _{k}^2)$.


\subsubsection{Uplink Data Transmission}
In the uplink, the received signal of the $z$-th R-AP can be modeled as follows
	\begin{align}\label{y_U,z}
		y_{U,z}
        =\sum_{j=1}^J &\mathbf{g}_{j,z}^{eq}\sqrt{P_{U, j}} s_{U, j}+\sum_{m=1}^{N_T} (\mathbf{U}_{z}^\mathrm{RF})^\mathrm{H} \mathbf{H}_{m, z}^{AP} \mathbf{x}_{m}\nonumber \\
           &+( \mathbf{U}_{z}^\mathrm{RF})^\mathrm{H} \mathbf{n}_{U,z},
	\end{align}
where $\mathbf{x}_{m}=\sum_{i=1}^{K} \mathbf{W}_{m}^\mathrm{RF}  \mathbf{f}_{m,{i}} \sqrt{\eta_{m, {i}}} s_{{i}}\in\mathbb{C}^{\mathrm{N_{AP}}\times1}$ is defined as the transmitted signal of the $m$-th T-AP, $\mathbf{W}_{m}^\mathrm{RF}\in\mathbb{C}^{ \mathrm{N_{AP}}\times \mathrm{N_{RF}}}$ denotes the analog beamforming matrix which is designed in Sec. \ref{equivalent_down}, $\mathbf{U}_{z}^\mathrm{RF}\in\mathbb{C}^{ \mathrm{N_{AP}}\times \mathrm{N_{RF}}}$ is the analog beamforming matrix of the $z$-th R-AP which is designed as shown in Sec. \ref{equivalent_up}, $\mathbf{n}_{U,z} \in \mathbb{C}^{\mathrm{N_{AP}}\times 1}$ is the noise vector following $\mathcal{CN}(\bm{0},\sigma_{z}^2\mathbf{I})$. 	


\section{Spectral Efficiency Analysis}\label{closed-form}
In this section, the closed-form sum-rate expressions of uplink and downlink transmission are derived under the proposed hybrid MIMO processing framework of NAFD cell-free mmWave systems.
   
\subsection{Downlink Spectral Efficiency}
 First, based on the definitions of the downlink equivalent channel $\mathbf{h}_{k,m}^{eq}$ and $\mathbf{h}_{k,m}^{eq}=\hat{\mathbf{h}}_{k,m}^{eq}+ \tilde{\mathbf{h}}_{k,m}^{eq}$ in Sec. \ref{equivalent_down}, \eqref{yd,k} can be further transformed into the following form
		\begin{align}\label{yd,k1}
			y_{D,k}
			= & \sum_{m=1}^{N_T} (\hat{\mathbf{h}}_{k,m}^{eq}+\tilde{\mathbf{h}}_{k,m}^{eq})^{\mathrm{H}}\mathbf{f}_{m, k} \sqrt{\eta_{m, k}} s_{k} \nonumber\\
  & +\sum_{i \neq k} \sum_{m=1}^{N_T}\big(\hat{\mathbf{h}}_{k,m}^{eq}+\tilde{\mathbf{h}}_{k,m}^{eq}\big)^{\mathrm{H}}\mathbf{f}_{m, i}\sqrt{\eta_{m, i}} s_{i}\nonumber \\
   &+\sum_{j=1}^{J} t_{k, j} \sqrt{P_{U, j}} s_{U,j}+ n_{D,k}.
		\end{align}
In order to eliminate the interference of the second term in (\ref{yd,k1}), the zero-forcing (ZF) digital precoder is adopted which can be designed by
	\begin{equation}\label{}
		\sum_{m=1}^{N_T} (\hat{\mathbf{h}}_{k,m}^{eq})^{\mathrm{H}} \mathbf{f}_{m, i}=\left\{\begin{array}{l}
			1 \text { if } i=k \\
			0 \text { if } i \neq k
		\end{array}\right..
	\end{equation}
In addition, the power coefficient $\eta_{m, k}$ is forced to only depend on $k$, so the subscript $m$ is removed from $\eta_{m,k}$, which results in $\bm{\eta}=\operatorname{diag}\left(\eta_{1}, \ldots, \eta_{K}\right)$ \cite{kim2021performance}. 
 
 After the interference cancellation, the signal received by the $k$-th downlink user \eqref{yd,k1} becomes
	\begin{align}\label{y_d,k_chaifen}
		\overline{y}_{D,k}=&\underbrace{\sqrt{\eta_{k}} s_{k}}_{\text{Desired Signal}}+\underbrace{n_{D,k}}_{\text{Noise}} \\
         &+ \underbrace{\sum_{i=1}^{K}\sum_{m=1}^{N_T} (\tilde{\mathbf{h}}_{k,m}^{eq})^\mathrm{H} \mathbf{f}_{m,i}\sqrt{\eta_i} s_i}_{\text{Downlink Estimation Error}} + \underbrace{\sum_{j=1}^{J}  t_{k, j} \sqrt{P_{U, j}} s_{U, j}}_{\text{Inter-User Interference}} \nonumber. 
	\end{align}
As in most existing papers \cite{5898372,8247283}, we consider the achievable data rate, where the downlink channel estimation error term in \eqref{y_d,k_chaifen} is treated as Gaussian noise. Additionally, for the sake of reducing the decoding complexity, the IUI term is also regarded as Gaussian noise. In the following, we derive the lower bound for the $k$-th downlink achievable rate in Theorem 2.
\begin{thm}
    The lower bound for the $k$-th downlink achievable rate is derived as \cite{8247283}
    \begin{align}\label{closed-downlink}
		R_{D,k}^{LB} = \log_2\big(1+\frac{\eta_{k}}{\mathcal{I}_{D,k}^{DEE}+ \mathcal{I}_{D,k}^{IUI}+\sigma_k^2}\big),
	\end{align}
	where 
 \begin{align}
\mathcal{I}_{D,k}^{DEE}&=\sum_{m=1}^{N_T}\sum_{i=1}^K\eta_i\operatorname{Tr}(\mathbf{f}_{m,i}\mathbf{f}_{m,i}^\mathrm{H}\mathbf{R}_{\tilde{\mathbf{h}}_{k,m}^{eq}}), \\
\mathcal{I}_{D,k}^{IUI}&=\sum_{j=1}^J\left| t_{k, j} \sqrt{P_{U, j}}\right|^2.
 \end{align}
\end{thm}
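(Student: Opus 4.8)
The plan is to apply the classical worst-case uncorrelated additive-noise bound (the ``use-and-then-forget'' technique, see \cite{8247283}) to the post-interference-cancellation signal in \eqref{y_d,k_chaifen}. I would write $\overline{y}_{D,k}=\sqrt{\eta_k}\,s_k+w_{D,k}$ with the effective noise
\[
w_{D,k}=\sum_{i=1}^{K}\sum_{m=1}^{N_T}(\tilde{\mathbf{h}}_{k,m}^{eq})^{\mathrm H}\mathbf{f}_{m,i}\sqrt{\eta_i}\,s_i+\sum_{j=1}^{J}t_{k,j}\sqrt{P_{U,j}}\,s_{U,j}+n_{D,k}.
\]
The first observation is that, because the ZF digital precoder enforces $\sum_{m=1}^{N_T}(\hat{\mathbf{h}}_{k,m}^{eq})^{\mathrm H}\mathbf{f}_{m,k}=1$ and the power coefficient is constrained to $\eta_{m,k}=\eta_k$, the useful gain in \eqref{yd,k1} collapses to the \emph{deterministic} constant $\sqrt{\eta_k}$; hence no beamforming-gain-uncertainty term survives and the derivation is a touch cleaner than the generic channel-hardening bound.

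Next I would verify that $w_{D,k}$ is uncorrelated with $s_k$, working conditionally on the channel estimates (equivalently on the digital precoders $\{\mathbf{f}_{m,i}\}$), the large-scale coefficients, and the IUI realizations $\{t_{k,j}\}$, while averaging over the estimation errors, the data symbols, and the thermal noise. The terms $n_{D,k}$ and $\{s_{U,j}\}$ are independent of $s_k$, so the only delicate contribution is the $i=k$ summand $\sqrt{\eta_k}\sum_m(\tilde{\mathbf{h}}_{k,m}^{eq})^{\mathrm H}\mathbf{f}_{m,k}\,s_k$, which itself carries $s_k$. Here the MMSE orthogonality principle is the crux: the estimation error $\tilde{\mathbf{h}}_{k,m}^{eq}$ is zero-mean and uncorrelated with $\hat{\mathbf{h}}_{k,m}^{eq}$, hence with every function of the estimates, in particular with $\mathbf{f}_{m,k}$, so $\mathbb{E}\{(\tilde{\mathbf{h}}_{k,m}^{eq})^{\mathrm H}\mathbf{f}_{m,k}\}=0$ and therefore $\mathbb{E}\{s_k^*\,w_{D,k}\}=0$. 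With $s_k$ of unit variance and $w_{D,k}$ uncorrelated with it, the achievable rate satisfies $I(s_k;\overline{y}_{D,k})\ge\log_2\!\big(1+\eta_k/\mathbb{E}\{|w_{D,k}|^2\}\big)$, since replacing $w_{D,k}$ by a circularly-symmetric complex Gaussian of the same variance minimizes the mutual information; this produces the structure of \eqref{closed-downlink}.

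It then remains to evaluate $\mathbb{E}\{|w_{D,k}|^2\}$. Since the data symbols $\{s_i\}$, the uplink symbols $\{s_{U,j}\}$ and $n_{D,k}$ are mutually independent and zero-mean, the variance splits additively into $\mathcal{I}_{D,k}^{DEE}$, $\mathcal{I}_{D,k}^{IUI}$ and $\sigma_k^2$. The IUI part is $\mathbb{E}\{|\sum_j t_{k,j}\sqrt{P_{U,j}}\,s_{U,j}|^2\}=\sum_j|t_{k,j}|^2P_{U,j}$ by symbol independence. For the estimation-error part, independence of the unit-power $\{s_i\}$ kills the $i\neq i'$ cross terms and independence of the error channels across distinct T-APs kills the $m\neq m'$ cross terms, so it reduces to $\sum_m\sum_i\eta_i\,\mathbb{E}\{|(\tilde{\mathbf{h}}_{k,m}^{eq})^{\mathrm H}\mathbf{f}_{m,i}|^2\}$; as $\mathbf{f}_{m,i}$ depends only on the estimates and is thus independent of $\tilde{\mathbf{h}}_{k,m}^{eq}$, each term equals $\mathbf{f}_{m,i}^{\mathrm H}\mathbf{R}_{\tilde{\mathbf{h}}_{k,m}^{eq}}\mathbf{f}_{m,i}=\operatorname{Tr}(\mathbf{f}_{m,i}\mathbf{f}_{m,i}^{\mathrm H}\mathbf{R}_{\tilde{\mathbf{h}}_{k,m}^{eq}})$, which is exactly $\mathcal{I}_{D,k}^{DEE}$. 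Substituting into the bound finishes the argument. I expect the main obstacle to be the uncorrelatedness bookkeeping rather than any hard estimate: one must argue carefully that the $i=k$ self-interference-through-estimation-error term acts as genuinely uncorrelated noise — which is precisely where the MMSE orthogonality property is indispensable — and must keep straight which quantities are averaged out (errors, symbols, noise) and which are held fixed ($\{\mathbf{f}_{m,i}\}$, $\{t_{k,j}\}$), so that the closed form retains the precoders and the IUI coefficients explicitly rather than only through their statistics.
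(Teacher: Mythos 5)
Your proposal is correct, and it reaches the paper's closed form by a genuinely different route. The paper's proof is a two-step argument: it takes \eqref{R_Dk} as the starting point — an achievable rate in which the estimation-error and IUI terms of \eqref{y_d,k_chaifen} are declared Gaussian noise and their instantaneous powers sit inside the $\log$ — and then applies Jensen's inequality (convexity of $x\mapsto\log_2\bigl(1+\eta_k/(x+\sigma_k^2)\bigr)$) to pull the expectation over symbols and estimation errors into the denominator, giving \eqref{R_DkLB}. You instead bypass \eqref{R_Dk} entirely and invoke the worst-case uncorrelated additive-noise (use-and-then-forget) capacity bound on $\overline{y}_{D,k}=\sqrt{\eta_k}s_k+w_{D,k}$, which is legitimate here precisely because the ZF constraint makes the useful gain deterministic, and you supply the uncorrelatedness $\mathbb{E}\{s_k^*w_{D,k}\}=0$ via the MMSE orthogonality of $\tilde{\mathbf{h}}_{k,m}^{eq}$ with the estimate-dependent precoders. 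What each buys: the paper's route is shorter once \eqref{R_Dk} is accepted, but that acceptance rests on the verbal "treat it as Gaussian noise" step; your route is more self-contained information-theoretically, since the worst-case-Gaussian lemma is exactly the rigorous justification for that step, at the price of the uncorrelatedness bookkeeping you describe. The variance computation is the same in both proofs (cross terms vanish by symbol independence and by independence of the error channels across T-APs, and $\mathbb{E}\{|(\tilde{\mathbf{h}}_{k,m}^{eq})^{\mathrm H}\mathbf{f}_{m,i}|^2\}=\operatorname{Tr}(\mathbf{f}_{m,i}\mathbf{f}_{m,i}^{\mathrm H}\mathbf{R}_{\tilde{\mathbf{h}}_{k,m}^{eq}})$), and the same conditioning convention (precoders and $t_{k,j}$ held fixed) is used implicitly by the paper, so the two arguments yield identical expressions \eqref{closed-downlink}.
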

\begin{proof}
As with the work in \cite{8247283}, we consider the achievable data rate for the $k$-th downlink user, which is denoted by \eqref{R_Dk}. By using Jensen's inequality, the lower bound of the $k$-th downlink rate can be derived as \eqref{R_DkLB}. 
	\begin{figure*}[!bp]
	\hrulefill
     	\begin{align}\label{R_Dk}
    R_{D,k}&=\mathbb{E}\bigg\{\log_2 \bigg(1+\frac{\eta_k}{|\displaystyle\sum_{i=1}^{K}\sum_{m=1}^{N_T} (\tilde{\mathbf{h}}_{k,m}^{eq})^\mathrm{H} \mathbf{f}_{m,i}\sqrt{\eta_i} s_i|^2+|\sum_{j=1}^{J}  t_{k, j} \sqrt{P_{U, j}} s_{U, j}|^2+\sigma_k^2}\bigg)\bigg\}.\\
    \label{R_DkLB}
        R_{D,k} &\geq \log_2\bigg(1+\frac{\eta_k}{\mathbb{E}\bigg\{|\displaystyle\sum_{i=1}^{K}\sum_{m=1}^{N_T}( \tilde{\mathbf{h}}_{k,m}^{eq})^\mathrm{H} \mathbf{f}_{m,i}\sqrt{\eta_i} s_i|^2\bigg\}+\mathbb{E}\bigg\{|\sum_{j=1}^{J}  t_{k, j} \sqrt{P_{U, j}} s_{U, j}|^2\bigg\}+\sigma_k^2}\bigg), \nonumber\\
        &=\log_2\bigg(1+\frac{\eta_k}{\displaystyle\sum_{m=1}^{N_T}\sum_{i=1}^K\eta_i\operatorname{Tr}(\mathbf{f}_{m,i}\mathbf{f}_{m,i}^\mathrm{H}\mathbf{R}_{\tilde{\mathbf{h}}_{k,m}^{eq}})+\sum_{j=1}^J\left| t_{k, j} \sqrt{P_{U, j}}\right|^2+\sigma_k^2}\bigg), \nonumber\\
        &\triangleq R_{D,k}^{LB}.
			\end{align}
     \end{figure*}
\end{proof}

\subsection{Uplink Spectral Efficiency}
After eliminating the inter-AP interference based on the estimated channel $\hat{\mathbf{H}}_{m, z}^{AP}$ in the digital domain of CPU~\cite{wang2019performance}, the received signal \eqref{y_U,z} can be rewritten as
		\begin{align}
			\tilde{y}_{U,z}=\sum_{j=1}^J &\mathbf{g}_{j,z}^{eq}\sqrt{P_{U, j}} s_{U, j}+\sum_{m=1}^{N_T} (\mathbf{U}_{z}^\mathrm{RF})^\mathrm{H} \mathbf{\tilde{H}}_{m, z}^{AP} \mathbf{x}_{m}\nonumber \\
           &+(\mathbf{U}_{z}^\mathrm{RF})^\mathrm{H} \mathbf{n}_{U,z},
		\end{align}
	where $\mathbf{g}_{j,z}^{eq} = \hat{\mathbf{g}}_{j,z}^{eq} + \tilde{\mathbf{g}}_{j,z}^{eq}$ and $\mathbf{\tilde{H}}_{m, z}^{AP}$ represents the error matrix of the inter-AP channel estimation with its covariance matrix $\mathbf{C}_{m,z}$ shown in (\ref{former}).
 
Similar to the design of digital precoders of T-APs for downlink transmission, the ZF digital precoder is considered at R-AP to eliminate interference from other uplink users. Then after the interference cancellation, the received signal of the $z$-th R-AP is represented as follows
		\begin{align}\label{y_u,z_chafen}
			\overline{y}_{U,z}= & \underbrace{\sqrt{P_{U, j}} s_{U, j}}_{\text{Desired Signal}}+\underbrace{\mathbf{v}_{zj}( \mathbf{U}_{z}^\mathrm{RF})^\mathrm{H}\mathbf{n}_{U,z}}_{\text{Noise}} \nonumber \\
            &+\underbrace{\sum_{m=1}^{N_T} \sum_{i=1}^{K} \mathbf{v}_{zj}( \mathbf{U}_{z}^\mathrm{RF})^\mathrm{H}\mathbf{\widetilde{H}}_{m, z}^{AP} \mathbf{W}_{m}^\mathrm{RF}\mathbf{f}_{m,i}\sqrt{\eta_i}s_i}_{\text{Inter-AP Residual Interference}},
		\end{align}
where the uplink estimation error plus the inter-AP residual interference are collectively referred to as the total estimation error. $\mathbf{v}_{zj} \in \mathbb{C}^{1 \times \mathrm{N_{RF}}}$ denotes the digital receiving vector of the $z$-th R-AP which is designed by
\begin{equation}\label{}
		\mathbf{v}_{zj} \hat{\mathbf{g}}_{j',z}^{eq}=\left\{\begin{array}{l}
			1 \text { if } j'=j \\
			0 \text { if } j' \neq j
		\end{array}\right..
	\end{equation}

 Similar to the downlink spectral efficiency analysis, we consider the achievable data rate, where the uplink channel estimation error and inter-AP residual interference in \eqref{y_u,z_chafen} are both treated as Gaussian noise. In the following, we derive the lower bound for the $j$-th uplink achievable rate in Theorem 3.
\begin{thm}
The lower bound for the $j$-th uplink achievable rate admits the following form
	\begin{equation}\label{closed-uplink}
		\begin{aligned}
            R_{U,j}^{LB}=\log_{2}\big(1+\frac{P_{U,j}}{\mathcal{I}_{U,j}^{TEE}+\mathcal{I}_{U,j}^{Noise}}\big),
		\end{aligned}
	\end{equation}
	where $\mathcal{I}_{U,j}^{TEE}$ and $\mathcal{I}_{U,j}^{Noise}$ are calculated by \eqref{TEE_j} and \eqref{TN_j} at the bottom of next  page.

\end{thm}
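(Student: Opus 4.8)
The plan is to mirror the downlink derivation behind Theorem~2. I would start from the post-combining observation of uplink user $j$ in \eqref{y_u,z_chafen}: the zero-forcing digital receiver $\mathbf{v}_{zj}$ pins the desired-signal coefficient to the deterministic value $\sqrt{P_{U,j}}$, so, using the convention already adopted before the theorem that the total estimation error and the effective noise are treated as Gaussian, the exact achievable rate is $R_{U,j}=\mathbb{E}\{\log_2(1+P_{U,j}/\mathcal{D})\}$, where the expectation is over the channel estimation errors and the noise (the estimates, hence $\mathbf{v}_{zj}$ and the precoders $\mathbf{f}_{m,i}$, being regarded as given) and $\mathcal{D}$ is the random power of the total estimation error plus the effective noise. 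Applying Jensen's inequality to the convex map $x\mapsto\log_2(1+P_{U,j}/x)$ then moves $\mathbb{E}\{\cdot\}$ into the denominator and produces exactly \eqref{closed-uplink}, with $\mathcal{I}_{U,j}^{TEE}+\mathcal{I}_{U,j}^{Noise}=\mathbb{E}\{\mathcal{D}\}$. The remaining work is the closed-form evaluation of these two moments.

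For the effective-noise moment I would use $\mathbf{n}_{U,z}\sim\mathcal{CN}(\mathbf{0},\sigma_z^2\mathbf{I})$, independent of the combiner, to obtain $\mathbb{E}\{|\mathbf{v}_{zj}(\mathbf{U}_z^\mathrm{RF})^\mathrm{H}\mathbf{n}_{U,z}|^2\}=\sigma_z^2\,\mathbf{v}_{zj}(\mathbf{U}_z^\mathrm{RF})^\mathrm{H}\mathbf{U}_z^\mathrm{RF}\mathbf{v}_{zj}^\mathrm{H}$, which is \eqref{TN_j}; the Gram factor $(\mathbf{U}_z^\mathrm{RF})^\mathrm{H}\mathbf{U}_z^\mathrm{RF}$ must be kept because the analog stage satisfies only a constant-modulus constraint and is not orthonormal.

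For the total-estimation-error moment I would collect the uplink equivalent-channel error term $\sum_{j'}\mathbf{v}_{zj}\tilde{\mathbf{g}}_{j',z}^{eq}\sqrt{P_{U,j'}}s_{U,j'}$ together with the inter-AP residual interference $\sum_m\sum_i\mathbf{v}_{zj}(\mathbf{U}_z^\mathrm{RF})^\mathrm{H}\tilde{\mathbf{H}}_{m,z}^{AP}\mathbf{W}_m^\mathrm{RF}\mathbf{f}_{m,i}\sqrt{\eta_i}s_i$ and cancel the cross terms using three facts: the data symbols $s_{U,j'}$ and $s_i$ are zero-mean, unit-variance and mutually independent; the error matrices $\tilde{\mathbf{H}}_{m,z}^{AP}$ are zero-mean and independent across T-APs; and, by MMSE orthogonality and the independence of the distinct physical channels, $\tilde{\mathbf{g}}_{j',z}^{eq}$ and $\tilde{\mathbf{H}}_{m,z}^{AP}$ are uncorrelated with the estimates that define $\mathbf{v}_{zj}$ and $\mathbf{f}_{m,i}$. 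The expectation then acts only on the error quantities, giving $\mathbb{E}\{|\mathbf{v}_{zj}\tilde{\mathbf{g}}_{j',z}^{eq}|^2\}=\mathbf{v}_{zj}\mathbf{R}_{\tilde{\mathbf{g}}_{j',z}^{eq}}\mathbf{v}_{zj}^\mathrm{H}$ for the uplink-error part, while for the inter-AP part I would invoke the identity $\mathrm{vec}(\mathbf{ABC})=(\mathbf{C}^\mathrm{T}\otimes\mathbf{A})\mathrm{vec}(\mathbf{B})$ to rewrite the scalar $\mathbf{v}_{zj}(\mathbf{U}_z^\mathrm{RF})^\mathrm{H}\tilde{\mathbf{H}}_{m,z}^{AP}\mathbf{W}_m^\mathrm{RF}\mathbf{f}_{m,i}$ as $\big((\mathbf{W}_m^\mathrm{RF}\mathbf{f}_{m,i})^\mathrm{T}\otimes\mathbf{v}_{zj}(\mathbf{U}_z^\mathrm{RF})^\mathrm{H}\big)\mathrm{vec}(\tilde{\mathbf{H}}_{m,z}^{AP})$, whose second moment is a quadratic form in the inter-AP error covariance $\mathbf{C}_{m,z}$ of \eqref{former}. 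Summing over $j'$, $m$ and $i$ with the weights $P_{U,j'}$ and $\eta_i$ then yields \eqref{TEE_j}, which, just like $\mathcal{I}_{D,k}^{DEE}$ in the downlink bound, keeps the digital combiner and precoders explicit.

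I expect the inter-AP residual interference to be the main obstacle: it is a trilinear form in three statistically coupled random objects --- the combiner $\mathbf{v}_{zj}$ (a function of the uplink estimates), the inter-AP error matrix $\tilde{\mathbf{H}}_{m,z}^{AP}$, and the downlink digital precoder $\mathbf{f}_{m,i}$ --- so collapsing it to a closed form needs the Kronecker rewriting above plus careful tracking of which factors are mutually independent, and one cannot use the usual shortcut $(\mathbf{U}_z^\mathrm{RF})^\mathrm{H}\mathbf{U}_z^\mathrm{RF}=\mathbf{I}$. Everything else reduces to routine second-moment computations entirely parallel to the proof of Theorem~2.
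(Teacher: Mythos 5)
Your proposal is correct and follows essentially the same route as the paper: the paper likewise writes the uplink rate as the expectation in \eqref{R_Uj} with the error terms treated as Gaussian noise, applies Jensen's inequality to reach \eqref{R_UjLB}, and evaluates the second moments so that the uplink estimation error and the Kronecker-rewritten inter-AP residual term yield $\mathcal{I}_{U,j}^{TEE}$ in \eqref{TEE_j} while the combined noise gives $\mathcal{I}_{U,j}^{Noise}$ in \eqref{TN_j}. Your additional remarks on the non-orthonormal analog combiner and the cross-term cancellation are consistent with (indeed slightly more explicit than) the paper's argument.
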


\begin{proof}
Similar to the downlink rate derivation, the achievable data rate for the $j$-th uplink user is written as \eqref{R_Uj}. Using Jensen's inequality, the lower bound of the $j$-th uplink rate is derived as \eqref{R_UjLB}.
\begin{figure*}[!bp]
	\hrulefill
     	\begin{align}
      \label{TEE_j} \mathcal{I}_{U,j}^{TEE}
&=\sum_{j'=1}^JP_{U,j'}\mathbf{v}_{zj}\mathbf{R}_{\tilde{\mathbf{g}}_{j'z}}\mathbf{v}_{zj}^\mathrm{H}+\sum_{m=1}^{N_T} \sum_{i=1}^{K}\eta_i\bigg((\mathbf{W}_{m}^\mathrm{RF}\mathbf{f}_{m,i})^\mathrm{T}\otimes \mathbf{v}_{zj}(\mathbf{U}_{z}^\mathrm{RF})^\mathrm{H}\bigg) \mathbf{C}_{m,z} \bigg((\mathbf{W}_{m}^\mathrm{RF}\mathbf{f}_{m,i})^\mathrm{T}\otimes \mathbf{v}_{zj}(\mathbf{U}_{z}^\mathrm{RF})^\mathrm{H}\bigg)^\mathrm{H}, \\
\label{TN_j} \mathcal{I}_{U,j}^{Noise}&=\bigg|\mathbf{v}_{zj}(\mathbf{U}_{z}^\mathrm{RF})^\mathrm{H}\mathbf{n}_{U,z}\bigg|^2=\sigma_z^2\mathbf{v}_{zj}(\mathbf{U}_{z}^\mathrm{RF})^\mathrm{H} \mathbf{U}_{z}^\mathrm{RF}\mathbf{v}_{zj}^\mathrm{H}.\\
\label{R_Uj} R_{U,j}&=\mathbb{E}\bigg\{\log_2 \bigg(1+\frac{P_{U,j}}{|\mathbf{v}_{zj}(\mathbf{U}_{z}^\mathrm{RF})^\mathrm{H}\mathbf{n}_{U,z}|^2+| \displaystyle\sum_{j=1}^J\mathbf{v}_{zj} \tilde{\mathbf{g}}_{j,z}^{eq}\sqrt{P_{U,j}}s_{U, j}|^2+| \displaystyle\sum_{m=1}^{N_T} \sum_{i=1}^{K} \mathbf{v}_{zj}( \mathbf{U}_{z}^\mathrm{RF})^\mathrm{H}\mathbf{\widetilde{H}}_{m, z}^{AP}\mathbf{W}_{m}^\mathrm{RF}\mathbf{f}_{m,i}\sqrt{\eta_i}s_i|^2}\bigg)\bigg\}.\\
    \label{R_UjLB}
    R_{U,j} &\geq \log_2\bigg(1+\frac{P_{U,j}}{\mathbb{E}\bigg\{| \displaystyle\sum_{j=1}^J\mathbf{v}_{zj} \tilde{\mathbf{g}}_{j,z}^{eq}\sqrt{P_{U,j}}s_{U, j}|^2 + |\sum_{m=1}^{N_T} \sum_{i=1}^{K} \mathbf{v}_{zj}(\mathbf{U}_{z}^\mathrm{RF})^\mathrm{H}\mathbf{\widetilde{H}}_{m, z}^{AP}\mathbf{W}_{m}^\mathrm{RF}\mathbf{f}_{m,i}\sqrt{\eta_i}s_i|^2\bigg\}+|\mathbf{v}_{zj}(\mathbf{U}_{z}^\mathrm{RF})^\mathrm{H}\mathbf{n}_{U,z}|^2}\bigg),\nonumber\\
    &=\log_{2}\big(1+\frac{P_{U,j}}{\mathcal{I}_{U,j}^{TEE}+\mathcal{I}_{U,j}^{Noise}}\big),\nonumber\\
    &\triangleq R_{U,j}^{LB}.
			\end{align}
     \end{figure*}
     
\end{proof}

	\section{MADRL-based Bidirectional Power Allocation}
 \subsection{Problem Formulation}
In this subsection, we introduce a power allocation optimization problem to enhance the desired signal while reducing the inter-AP residual interference and IUI, thereby improving the bidirectional spectral efficiency. Specifically, the objective of the proposed problem is to maximize the weighted uplink and downlink sum rate based on the bidirectional closed-form expressions (\ref{closed-downlink}) and (\ref{closed-uplink}) derived in Sec. \ref{closed-form}, which needs to meet the transmit power limitations of T-APs and uplink users. Thus, the specific optimization problem can be formulated as
    \begin{subequations}
			\begin{alignat}{2}
			&\max _{\eta_{k},P_{U,j}} \quad& &\omega _D \sum_{k=1}^{K} R_{D,k}^{LB}+ \omega_U \sum_{j=1}^{J} R_{U,j}^{LB}\label{mubiao} \\
			& \;\; \text { s.t. }
			& \quad & P_{D,m}\leq P_D ,  \quad \forall m, \label{yueshu1} \\
			&&& P_{U,j}\leq P_U, \quad  \forall j, \label{yueshu3}
		\end{alignat}
	\end{subequations}
where $\eta_{k}$ and $P_{U,j}$ represent the bidirectional optimization variables, (\ref{yueshu1}) and (\ref{yueshu3}) denote the power constraints for the T-APs and uplink users, respectively. We consider the case of $\omega_D+\omega _U = 1$ to ensure the fairness of communication, where $\omega_D$ and $\omega _U$ denote the weights of downlink and uplink sum rate respectively. The transmitted power of the $m$-th T-AP $P_{D,m}$ is defined as 
\begin{align}\label{downlink power limit}
P_{D,m}(\bm{\eta}) &= \mathbb{E}[\Vert \mathbf{x}_m \Vert^2]=\operatorname{Tr}\left( \mathbf{W}_{m}^\mathrm{RF} \mathbf{F}_{m} \boldsymbol{\eta} \mathbf{F}_{m}^{\mathrm{H}} (\mathbf{W}_{m}^\mathrm{RF})^{\mathrm{H}}\right),
\end{align}
which $\mathbf{F}_m = \big[ \mathbf{f}_{m,1}, \cdots, \mathbf{f}_{m,K}\big]$.

    \subsection{The MATD3 Algorithm: An Improved Approach Based On The MADDPG}
    \begin{figure*}[!tp]
    \centering
    \includegraphics[width=0.85\textwidth]{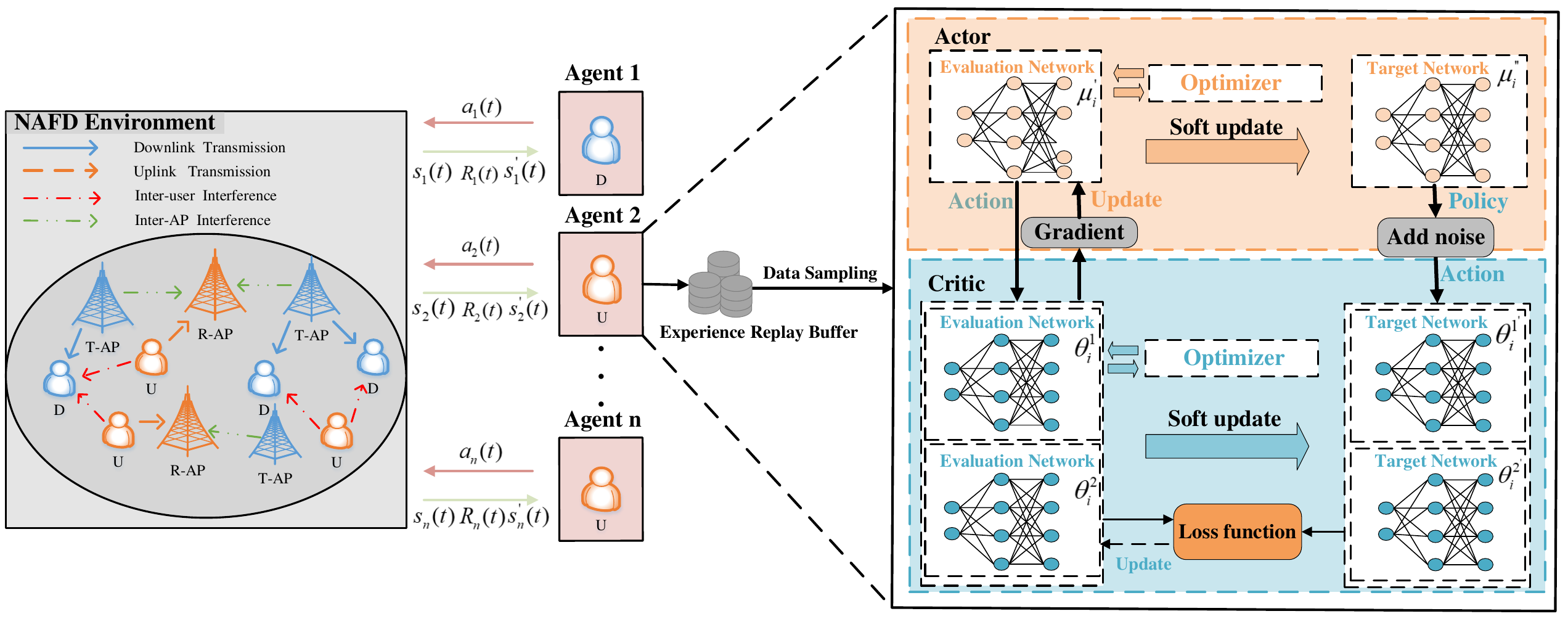}
    \caption{{The MATD3 framework for the power allocation of NAFD cell-free mmWave networks.}\label{structure_MATD3}}
    \end{figure*}
Although some traditional convex optimization approaches can effectively deal with the power allocation problem and obtain the optimal solution which can be verified in our previous work \cite{li2023network} that an algorithm based on successive convex approximation is proposed to solve the joint optimization problem of power allocation and fronthaul compressed noises. However, these methods also have the limitation that the computational complexity escalates with the expansion of the system configuration, encompassing factors such as the quantity of APs, users, and antennas~\cite{lei2021maddpg}.

Based on the above considerations, we use the DRL framework as an alternative algorithm to implement power allocation schemes with low computational complexity. 
Moreover, conventional Q-learning and Deep Q-Network (DQN) algorithms necessitate discretization of continuous state and action spaces, potentially leading to imprecise outcomes and suboptimal system performance.
Meanwhile, increased discretization accuracy can result in a dimensionality catastrophe, leading to substantial computational overhead~\cite{fredj2022distributed}.

Therefore, a centralized training with a distributed execution algorithm named MADDPG \cite{lowe2017multi} is proposed to solve the power allocation problem with continuous action space and state space as the multi-agent version of DDPG algorithm, which is essentially an actor-critic-based approach. In an actor-critic network structure, each agent's actor network is responsible for the local policy evaluation and a centralized critic network is responsible for the $Q$-value evaluation.

However, overestimation bias is a common problem in deep reinforcement learning algorithms because maximizing the state-action Q-value function in the critic network to learn the optimal strategy will lead to unstable and extremely optimistic learning results. Unfortunately, since the MADDPG algorithm maximizes the estimation of the target critic network which is used to learn the optimal strategy, the MADDPG suffers from such overestimation bias problem \cite{lowe2017multi}.

Therefore, the MATD3 is proposed to avoid the overestimation bias of Q-value function and to provide a more stable training process as the multi-agent domain of TD3 algorithm, where two critic networks and two target critic networks are adopted in each agent \cite{ackermann2019reducing}.

It can be seen from Fig. \ref{structure_MATD3} which shows the MATD3 framework for the power allocation of NAFD cell-free mmWave networks that the MATD3 also adopts the centralized training and distributed execution approach. However, different from single critic network of each agent in the MADDPG, the $i$-th agent in the MATD3 algorithm utilizes two evaluation critic networks with parameters $\theta_i^1$ and $\theta_i^2$ and two target critic networks with delayed parameters $\theta_i^{1^{'}}$ and $\theta_i^{2{'}}$ which generates two evaluated Q-values and two target Q-values respectively to reduce the influence of overestimation bias caused by $max$ function in Bellman equation. Then the target Q-value of each agent depends on the minimum of the two target critic networks' Q-values, which can be calculated as
    \begin{align}\label{matd3_q}
        y_i =r_{i}+\gamma \min_{j=1,2} Q_{\theta_i^{j^{'}}}^{\bm{\pi}}\left(\mathbf{s}_j^{\prime},\tilde{a}_j \right),
    \end{align}
where $\mathbf{s}_j^{\prime}$ and $\tilde{a}_j$ indicate the next state and the action generated by the target actor network of the $i$-th agent, respectively. $Q_{\theta_i^{j^{'}}}^{\bm{\pi}}$ denotes the Q-value function of the $i$-th agent. In order to avoid narrow-peak overfitting of Q-value function, random noise is considered to be added to the actions performed by the target actor network in the training process, which can make Q-value estimation smoother \cite{zhao2022multi}. Thus, the action $\tilde{\bm{a}}_j$ in (\ref{matd3_q}) is represented by 
    \begin{align}
\tilde{a}_j=\pi_{\mu_j^{\prime}}\left(s_j^{\prime}\right)+\text{clip}(\mathcal{N}(0,\hat{\sigma}_a^2), -1, 1),
    \end{align}
where $\hat{\sigma}_a$ is the standard deviation of the additive noise.

Moreover, in the training phase, the evaluation critic networks and the target critic networks of each agent have a global perspective, in which the evaluation critic networks of the $i$-th agent are responsible for obtaining the state and action information of other agents. Then all agents use the global information $\bm{s}(t)$ and joint actions $\bm{a}(t)=\big\{a_1(t),\cdots, a_N(t)\big\}$ to obtain their own Q-values and each agent adjusts its own local actor network's strategy to realize the globally optimal policy $\bm{\pi}=\big\{\pi_1,\cdots,\pi_N \big\}$ based on other agents' policies. In order to make the training process more stable, the agent stores the observed information $(\mathbf{s},\mathbf{s'},a_1,\cdots,a_N,r_1,\cdots,r_N)$ in the experience replay buffer $\mathcal{D}$ and update their own evaluation actor network parameters $\bm{\mu}^{\prime}=\big\{\mu_1^{\prime}, \mu_2^{\prime}, \cdots, \mu_N^{\prime}\big\}$ through the policy gradient which is written as
    \begin{align}\label{matd3_actor}
\nabla_{\mu_{i}^{\prime}} J\left(\mu_{i}^{\prime}\right)=&\mathbb{E}_{\mathbf{s}, a_i}\bigg[\nabla_{\mu_i^{\prime}} \pi_{\mu_{i}^{\prime}}\left(s_i\right) \nonumber\\
         &\times\nabla_{a_{i}} Q_{i}^{\bm{\pi}}\left(\mathbf{s}, a_{1}, \ldots, a_N\right)\big|_{a_{i}=\pi_{\mu_{_i}^{\prime}}\left(s_i\right)}\bigg].
    \end{align}

Then the parameters of two evaluation critic networks are updated by the loss function as follows
    \begin{align}\label{loss_eva_criric}
            \mathcal{L}\left(\theta_{i}^j\right) &=\mathbb{E}_{\mathbf{s}, a, r, \mathbf{s}^{\prime}}\left[\left(Q_{\theta_i^{j}}^{\bm{\pi}}\left(\boldsymbol{s}_j, \boldsymbol{a}_j\right)-y_i\right)^{2}\right], \quad j=1,2,
    \end{align}
where $y_i$ is defined in (\ref{matd3_q}). And each agent updates its own three evaluation network parameters based on the following formulations
\begin{align}
\mu_i^{\prime} & \leftarrow \mu_i^{\prime}-\lambda \nabla_{\mu_i^{\prime}} J\left(\mu_i^{\prime}\right), \label{actor_eva}\\
\theta_i^j & \leftarrow \theta_i^j-\lambda \nabla_{\theta_i^j} \mathcal{L}\left(\theta_i^j\right), \quad j=1,2,\label{critic_eva}
\end{align}
where $\lambda$ denotes the learning rate. In addition, the parameters of each agent's evaluation actor network are updated more slowly than those of the evaluation critic networks to ensure that the error of time difference is minimized. We assume that the evaluation actor network is updated every $d$ training step.
At the same time, according to the equation (\ref{actor_eva}) and (\ref{critic_eva}), we update the parameters of the three target networks of each agent, which are defined as follows
    \begin{align}
    & \mu_i^{\prime\prime}=\epsilon \mu_i^{\prime}+(1-\epsilon) \mu_i^{\prime\prime}, \label{target_actor}\\
    & \theta_i^{j^{\prime}}=\epsilon \theta_i^j+(1-\epsilon) \theta_i^{j^{\prime}}, \quad j=1,2,\label{target_critic}
    \end{align}
    where $\epsilon$ denotes the updating rate.

In the distributed execution phase, the $i$-th agent executes its actor policy based on its own local state observations $s_i(t)$ through the trained evaluation actor network. Since agents do not communicate with each other during the execution phase, the communication cost will be extremely reduced which contributes that the MATD3 algorithm can be naturally extended to NAFD cell-free mmWave networks. The specific MATD3 scheme for the power allocation optimization problem is summarized in \textbf{Algorithm \ref{alg1}}. 

        \begin{algorithm}[ht]
		\renewcommand{\algorithmicrequire}{\textbf{Input:}}
		\renewcommand{\algorithmicensure}{\textbf{Output:}}
		\caption{The MATD3 Algorithm for Power Allocation Problem}
		\label{alg1}
		\begin{algorithmic}[1]
			\STATE Initialize each agent's actor networks with parameters $\mu_i^{\prime}$ and $\mu_i^{\prime\prime}$, each agent's critic networks with parameters $\{\theta_i^{j}\}_{j=1,2}$ and $\{\theta_i^{j^{'}}\}_{j=1,2}$, respectively.
			\STATE Initialize each agent's experience replay buffer $\mathcal{D}$.
			\FOR{episode $= 1, 2, \cdots, N_e$}
			\FOR{$t = 1, 2, \cdots, t_{max}$}
			\STATE Based on the current state $\bm{s}(t)$, the uplink and downlink agents perform actions $\bm{a}(t)$ which consist of $P_{U,j}$ and $\eta_k$ respectively. 
			\STATE All the agents obtain the immediate reward $R_k(t)$ and $R_j(t)$ and the next state $\bm{s}^{'}(t)$ after the interaction with the environment.
			\STATE Store the experience tuple $\{ \textbf{s}(t), \textbf{s}^{'}(t), \textbf{a}(t), \mathcal{R}(t)\}$ in $\mathcal{D}$.
			\FOR{$i=1,\cdots,N$}
			\STATE Sample a mini-batch $D$ of experience in $\mathcal{D}$ randomly.
			\STATE Update parameters $\{\theta_i^{j}\}_{j=1,2}$ of the $i$-th agent's evaluation critic networks by minimizing loss function based on the equation (\ref{loss_eva_criric}).
			\IF{$t$ mod $d$} 
            \STATE Update parameter $\mu_i^{\prime}$ of the $i$-th agent's evaluation actor network defined as (\ref{actor_eva}).
            \STATE Update parameters $\mu_i^{\prime\prime}$ and $\{\theta_i^{j^{'}}\}_{j=1,2}$ of the $i$-th agent's three target networks with the formulation (\ref{target_actor}) and (\ref{target_critic}).
            \ENDIF 
			\ENDFOR
			\ENDFOR
			\ENDFOR
		\end{algorithmic}
	\end{algorithm}
    
    \subsection{Design of the MATD3 Algorithm}
    In this section, to deploy the MATD3 algorithm to solve the power allocation problem proposed in this paper, we regard the uplink and downlink users as agents, and design several important elements in the MATD3 algorithm, including the state information observed by agents, the actions of agents and the rewards they achieve \cite{fan2022maddpg}.
    
    \begin{itemize}
    \item \emph{\textbf{Action}}: The actions of each agent depend on the optimization variables in the power allocation problem. Specifically, for the $j$-th uplink user, its action is the transmitting power coefficient $P_{U,j}$, while the $k$-th downlink user's corresponding action is power coefficient $\eta_k$.
        
    \item \emph{\textbf{Observation}}: The observed value of each agent is the estimated equivalent CSI. Specifically, for the $j$-th uplink user agent, it observes the estimated equivalent CSI from all uplink users to all R-APs, i.e. $\hat{\mathbf{g}}_{j,z}$ for all $j$ and $z$ and interference information between uplink users and downlink users, i.e. $t_{k,j}$ for all $k$ and $j$. For the $k$-th downlink user agent, it observes the estimated equivalent CSI from all T-APs to all downlink users, i.e. $\hat{\mathbf{h}}_{km}$ for all $k$ and $m$ and IUI information, i.e. $t_{k,j}$ for all $k$ and $j$.

    \item \emph{\textbf{Reward}}: At the $t$-th training moment, each agent interacts with the environment to get its immediate reward $R_{D,k}^{LB}(t)$ and $R_{U,j}^{LB}(t)$, which feeds back to the agent and guides the subsequent decision-making action. Since the uplink user's power range is set to meet the power constraint (\ref{yueshu3}) initially, we design the immediate reward at the $t$-th training moment of bidirectional agents as follows
    \begin{align}
        &R_{D,k}^{LB}(t) = \omega_D R_{D,k}^{LB} + \beta*{\rm clip}\left(P_D-P_{D,m},-1,1\right), \label{r_k}\\
        &R_{U,j}^{LB}(t) = \omega_U R_{U,j}^{LB},
    \end{align}
    where $\beta$ is the appropriate punishment factor verified by multiple simulations and the second term of the above equation (\ref{r_k}) means that a positive feedback is applied if the power constraint is satisfied, otherwise the opposite. Therefore, the total immediate reward at the $t$-th time slot is
    \begin{align}
        R(t) = \sum_{k=1}^K R_{D,k}^{LB}(t) + \sum_{j=1}^J R_{U,j}^{LB}(t).
    \end{align}
    \end{itemize}
	
	\section{Simulation Results}
In this section, numerical experiments are conducted to evaluate the effectiveness of our proposed mmWave MIMO channel estimation schemes and the performance comparison of MADRL algorithms.

\subsection{Parameter Settings}\label{setting}
Firstly, we consider an NAFD cell-free mmWave network comprised of $N_T = 6$ T-APs, $N_R = 6$ R-APs, $J = 4$ uplink users and $K = 4$ downlink users randomly deployed in a circular area with radius $R=60$ m, where all the APs are equipped with $\mathrm{N_{AP}} = 6$ antennas and $ \mathrm{N_{RF}} = 3$ RF chains, and each user just has a single antenna. Unlike CCFD systems, T-APs and R-APs in NAFD systems are geographically separated in this area and the protection distance between users and APs is $5$ m. In addition, we consider mmWave communication over the $28$ GHz carrier frequency and the $100$ MHz channel bandwidth with a total noise power $\sigma_k^2 = \sigma_z^2=-85$ dBm, the maximum transmit power of T-APs $P_D = 30$ dBm and the transmit power limitation of users $P_U = 27$ dBm. The path loss generated by large-scale fading at mmWave frequency is modeled as
	\begin{align}
		\mathrm{PL}(d)[\mathrm{dB}]=\mathrm{PL}(d_0)+10\xi \log _{10}(\frac{d}{d_0})+X_{\zeta},
	\end{align}
where $\mathrm{PL}(d_0)=20\log_{10}(\frac{4\pi d_0}{\lambda_{c}})$ and $d_0$ represents the free space path loss and a reference distance of $1$ m, respectively. $\xi$ is the path loss index set to $2.92$ and $X_{\zeta}$ is the logarithm shadow fading coefficient which follows a normal distribution $\mathcal{N}(0,\sigma_{log}^2)$ with its variance $8.7^2$. 
 
The performance of our proposed mmWave MIMO channel estimation scheme is measured by normalized mean square error (NMSE) and defined as
	\begin{align}
		 \mathrm{NMSE}=\mathbb{E}\left\{\|\hat{\mathbf{H}}_{m,z}-\mathbf{H}_{m,z}\|_\mathrm{F}^2 /\|\mathbf{H}_{m,z}\|_\mathrm{F}^2\right\}.
	\end{align}
 In the context of the proposed MADRL-based power allocation, the actor and critic networks of each agent are realized by using fully connected neural networks comprising two hidden layers. Each of these hidden layers is composed of $64$ nodes and is subsequently activated by the Tanh function. The dimension of the final layer of the actor network aligns with that of the agent's action. The important hyperparameters associated with the MATD3 scheme are concisely presented in \textbf{Table~\ref{table_simulation}}.
  
	\begin{table}
	\caption{Hyperparameters of the MATD3 algorithm}
	\label{table_simulation}
	\setlength{\tabcolsep}{0.9mm}
	\centering
	\begin{tabular}{l l}
		\toprule[1.5pt]
		\textbf{Hyperparameter} & \textbf{Value} \\
		\midrule
		Number of agents \qquad \qquad  \qquad\qquad\qquad\qquad\qquad & 8  \\
		Mini-batch size $D$ & 1024\\
		Learning rate $\lambda$   & 0.0005 \\
		Discount factor $\gamma$ & 0.95 \\
		Number of units &64\\
		Policy update rate &2\\
		Max training step $t_{max}$ in Algorithm \ref{alg1}  &50\\ 
		Critic-action noise  &0.2\\
		Action noise  &0.5\\
            Optimizer  & Adam\\
		\bottomrule[1.5pt]
	\end{tabular}
    \end{table}

\subsection{Evaluation of Channel Estimation}\label{channel_estimation}

     \begin{figure}[!h]
    \centering
    \includegraphics[width=7cm]{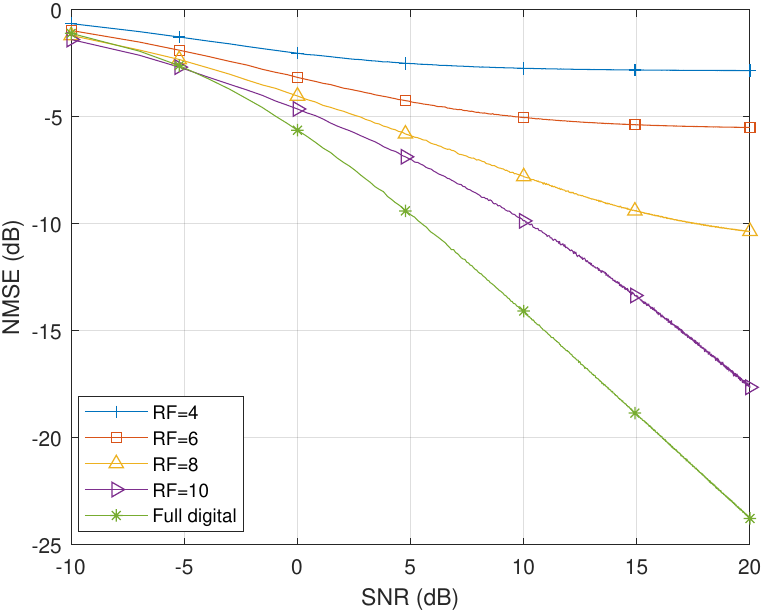}
    \caption{{NMSE for different RF chains and \textbf{full digital} case with $\mathrm{N_{AP}}=32$.}\label{NMSE1}}
    \end{figure}
     \begin{figure}[!h]
    \centering
    \includegraphics[width=7cm]{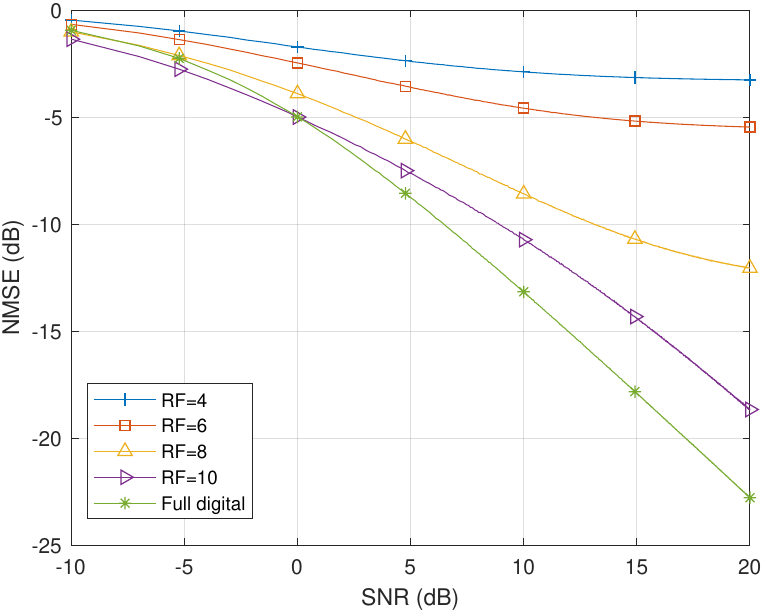}
    \caption{{NMSE for different RF chains and \textbf{full digital} case with $\mathrm{N_{AP}}=128$.}\label{NMSE2}}
    \end{figure}
Fig. \ref{NMSE1} and Fig. \ref{NMSE2} show the NMSE versus signal-to-noise ratio (SNR) for the proposed channel estimation algorithm under various RF chain configurations and the \textbf{full digital} scheme. The scenarios consider $32$ and $128$ antennas, respectively. 
The so-called \textbf{full digital} scheme aligns the number of antennas with that of  RF chains, a configuration prevalent in sub-6G communication precoding design. 
Analysis of Figs. \ref{NMSE1} and \ref{NMSE2} reveal that with an increase in RF chains, the NMSE values decrease for identical SNR levels. 
This suggests that increasing the number of RF chains improves channel estimation accuracy, although it consistently exhibits lower performance compared to the full-digital counterpart.
Specifically, as shown in Fig. \ref{NMSE1}, when the SNR is $-10$ dB, the performance of the MMSE-based channel estimation approach is poor under different numbers of RF chains and \textbf{full digital} scheme. 
However, when the SNR increases to $20$ dB, the value of NMSE under \textbf{full digital} case reaches nearly $-25$ dB. 
By contrast, the NMSE of the estimated channel at $10$ RF chains is about $-17.5$ dB, while the NMSE value at $4$ RF chains is only $-3$ dB, which indicates a extremely poor estimation performance. Meanwhile, in the practical applications of mmWave systems, Fig. \ref{NMSE1} can be referenced to select an appropriate number of RF chains in the hybrid precoder to achieve near-optimal performance at a lower cost.


Additionally, as the number of antennas increases, the NMSE values depicted in Fig. \ref{NMSE1} and Fig. \ref{NMSE2} indicate comparable performance in channel estimation between $128$ and $32$ antennas.
This observation further demonstrates the efficacy of the proposed channel estimation algorithm.
The proposed algorithm capitalizes on the mmWave MIMO channel linking T-AP and R-AP antennas. 
Notably, the performance remains robust as the dimensionality of the estimated channel increases.
A specific comparison between Fig. \ref{NMSE1} and Fig. \ref{NMSE2} highlights the similarity in NMSE values for scenarios with the same number of RF chains but varying antenna numbers. 
For instance, for an AP equipped with $10$ RF chains, the NMSE value is approximately $-18$ dB for both $32$ and $128$ antennas at an SNR of $20$ dB.
In contrast, the \textbf{full digital} case yields a similar value of approximately $-23$ dB.

\subsection{Comparison of MADRL Algorithms}
    \begin{figure}[!h]
    \centering
    \includegraphics[width=7cm]{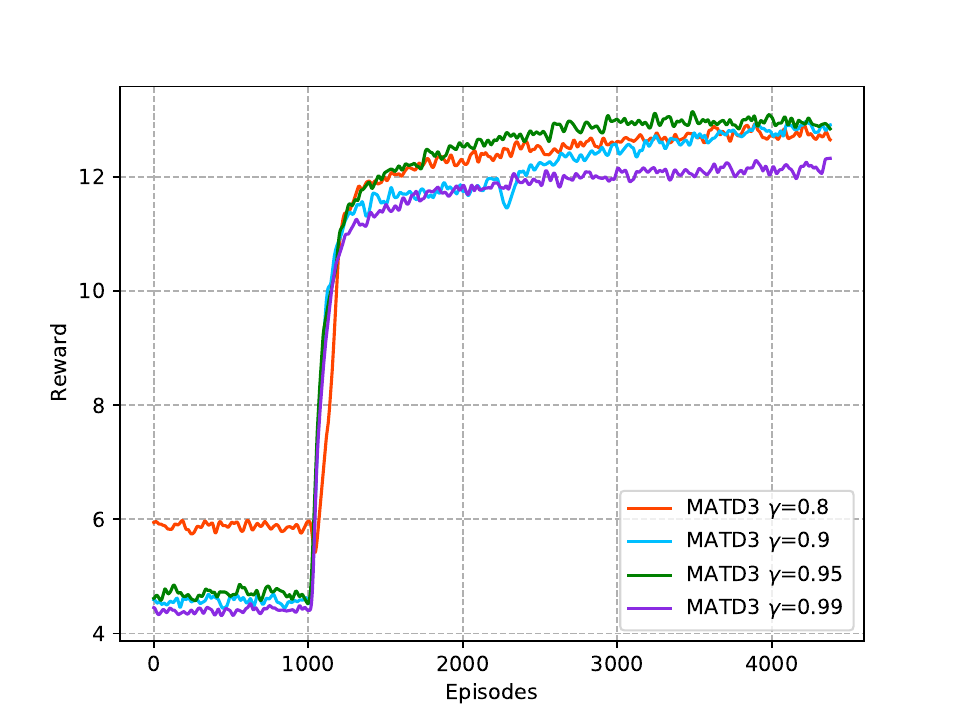}
    \caption{{The average reward with different $\gamma$ based on the MATD3.}\label{MATD3_different_gamma}}
    \end{figure}
Fig. \ref{MATD3_different_gamma} tests the average reward convergence for different discount factors $\gamma$. As can be seen from Fig. \ref{MATD3_different_gamma}, small discount factors, such as $\gamma=0.8$ and $\gamma=0.9$, could lead the MATD3 algorithm to converge to local optimal solutions.
Conversely, a high discount factor, e.g., $\gamma=0.99$,  overly emphasizes future rewards, potentially disrupting the strategy's responsiveness to immediate rewards. 
In summary, the MATD3 algorithm achieves the best convergence performance with $\gamma=0.95$. 
    
    \begin{figure}[!h]
    \centering
    \includegraphics[width=7cm]{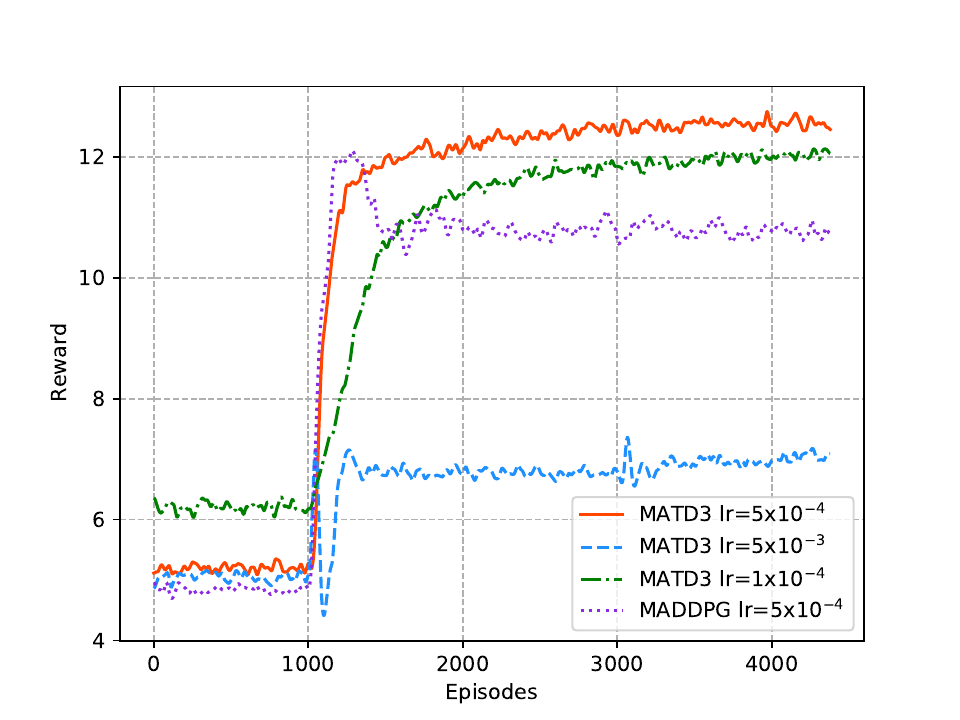}
    \caption{{The average reward with different learning rate based on the MATD3 and MADDPG.}\label{MATD3_different_lr_rate}}
    \end{figure}
    Fig. \ref{MATD3_different_lr_rate} illustrates the average reward curve for different learning rates based on the MATD3 and MADDPG algorithms. 
    Overall, the scheme with the learning rate $0.0005$ achieves the best average reward values compared with the learning rates $0.005$ and $0.0001$. 
    Specifically, when the learning rate is $0.005$, the learning curve finally converges to $7$, which is about half lower than the convergence values for the learning rates $0.0005$. 
    This is due to the fact that an excessive exploration rate may cause the MATD3 algorithm to fall into a local optimal solution, resulting in relatively low convergence values. 
    In addition, when the learning rate is $0.0001$, the speed of convergence becomes slower compared to the learning rate of $0.0005$ because of the huge difference between the evaluation network and the target network. Consequently, taking into account the training process speed and the average reward value during convergence, we identify $0.0005$ as the optimal learning rate for the MATD3 scheme in the following experiments.
    
    It can also be seen from Fig.~\ref{MATD3_different_lr_rate} that the MADDPG training curve first increases to $12$, then falls, and finally converges around $10.9$. 
    In contrast, MATD3 displays a more consistent learning performance and attains a higher convergence value of $12.8$, signifying an enhancement of the overall average reward by approximately $1.9$. 
    This is due to the fact that the MATD3 utilizes dual-centralized critic networks to overcome the overestimation error, utilizes target policy smoothing to reduce variance, and utilizes delayed policy updates to ensure a stable learning process. 
    In addition, it's worth noting that unlike supervised learning characterized by well-defined labels, training curves in MADRL algorithms exhibit specific oscillations.

    \begin{figure}[!h]
    \centering
    \includegraphics[width=7cm]{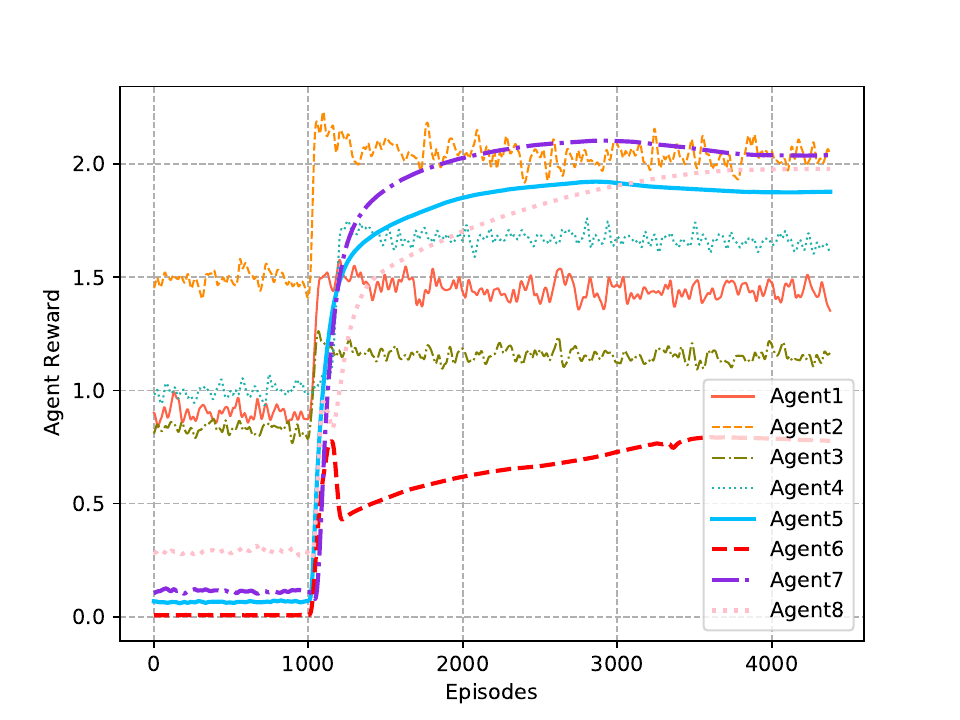}
    \caption{{The training curves of every agent average reward with the learning rate $0.0005$.}\label{agent_reward}}
    \end{figure}
Fig. \ref{agent_reward} represents the training curves of each agent with the learning rate $0.0005$. 
In terms of the optimization problem to be solved, there are eight agents, where agent $1$ $\sim$ $4$ corresponds to four uplink users, and agent $5$ $\sim$ $8$ corresponds to four downlink users. 
This figure demonstrates a consistent pattern: as training episodes advance, reward values for all agents gradually increase and ultimately converge.
Due to individual agents operating within distinct environments, their actions yield diverse rewards, culminating in various levels of convergence. 
For example, the final convergence value of agent $6$ is around $0.78$, which is much smaller than the convergence value around $2.04$ of agent $2$. 
Nevertheless, the agents collaborate, working towards the overarching goal of maximizing the cumulative long-term return.

    \begin{figure}[!h]
    \centering
    \includegraphics[width=7cm]{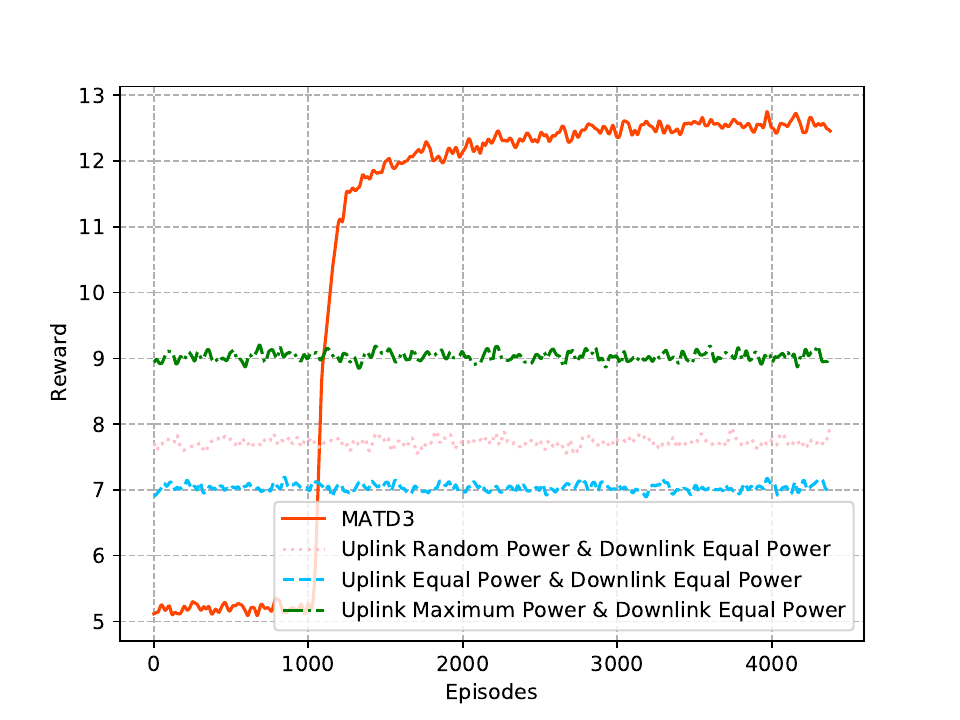}
    \caption{{The average reward for the MATD3 and different power allocation schemes.}\label{MATD3_different_Power}}
    \end{figure}
    
   Fig. \ref{MATD3_different_Power} compares the performance of the MATD3 algorithm with conventional power allocation schemes: a) uplink random power $\&$ downlink equal power allocation, b) uplink equal power $\&$ downlink equal power allocation, and c) uplink maximum power $\&$ downlink equal power allocation.
   It is worth noting that due to the interplay of power coefficients in the downlink power constraint described by \eqref{downlink power limit}, we consider downlink equal power, where $\bm{\eta}$ is set to $P_D/\mathrm{max}(\operatorname{Tr}( \mathbf{W}_{m}^\mathrm{RF} \mathbf{F}_{m} \mathbf{F}_{m}^{\mathrm{H}} ( \mathbf{W}_{m}^\mathrm{RF})^{\mathrm{H}}))$. 
   Overall, in terms of convergence performance, the MATD3 $\textgreater$ scheme c $\textgreater$ scheme a $\textgreater$ scheme b. Specifically, the final convergence reward of the MATD3 is improved by $3.8$ compared to scheme c, by $5$ compared to scheme a, and by $5.8$ compared to scheme b, which is consistent with results caused by different uplink power schemes.   
    \begin{figure}[!h]
    \centering
    \includegraphics[width=7cm]{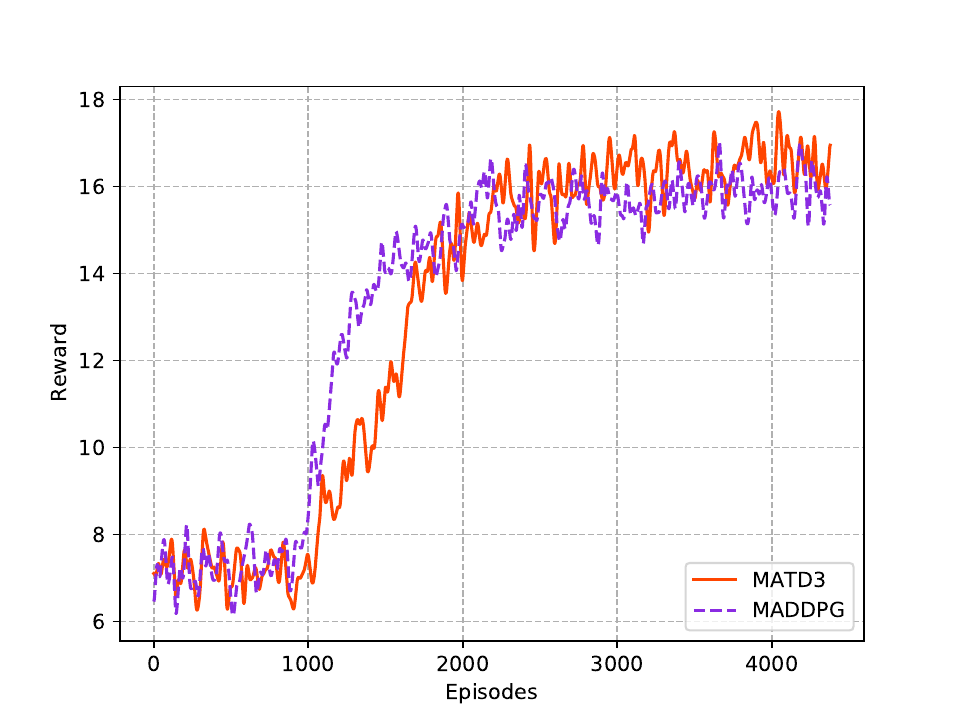}
    \caption{{The average reward for the MATD3 and MADDPG in dynamic environment.}\label{MATD3_MADDPG_dynamic}}
    \end{figure}

    Fig. \ref{MATD3_MADDPG_dynamic} demonstrates the learning performance of the MATD3 and MADDPG in dynamic environments, where the users' positions undergo periodic changes.
    As can be seen from Fig. \ref{MATD3_MADDPG_dynamic}, the two algorithms end up oscillating and converging within a certain interval.    
    Remarkably, the MATD3 algorithm achieves marginally superior rewards than the MADDPG algorithm, albeit at the expense of a little reduced convergence speed.
	\section{Conclusion}
In this paper, we analyze the NAFD cell-free mmWave networks within a hybrid MIMO processing paradigm. 
In this context,, we first design inter-AP interference channel estimation, user-AP channel estimation, and hybrid digital and analog precoding/combining. 
Building upon these hybrid processing designs, we derive closed-form expressions for bidirectional sum rates during data transmission.
We then present a joint power allocation optimization problem aimed at maximizing the weighted bidirectional sum rate while satisfying the constraints of power limitations. 
To tackle the inherent non-convex power allocation challenge in NAFD cell-free mmWave networks, we propose a collaborative MADRL framework named the MATD3 algorithm. 
The proposed algorithm efficiently obtains the optimal strategy, bypassing the computational complexity and overhead associated with convex optimization. 
Through the simulation results, we verify the effectiveness of the proposed channel estimation schemes and the convergence of the MATD3 algorithm. 
Furthermore, we demonstrate that the MATD3 algorithm outperforms the MADDPG optimization scheme and other conventional power allocation methods.

\begin{appendices}
\numberwithin{equation}{section}
\section{Appendix A Proof of Theorem 1}
According to the MMSE criterion~\cite{1597555}, we can obtain the channel estimate $\mathrm{vec}(\hat{\mathbf{H}}_{m, z}^{AP})$ and its error covariance matrix $\mathbf{C}_{m,z}$ written as
\begin{subequations}
     	\begin{align}
\mathbf{C}_{m,z} &= \mathbb{E} \big\{\Vert\mathrm{vec}(\mathbf{H}_{m, z}^{AP})-\mathrm{vec}(\hat{\mathbf{H}}_{m, z}^{AP})\Vert^2\big\} \nonumber \\
		&= \mathbf{R}_{m,z}^{AP}-\rho_{\tau_{AP}}\mathbf{R}_{m,z}^{AP}\mathbf{A}^\mathrm{H}\nonumber \\
  &\times (\rho_{\tau_{AP}}\mathbf{A}\mathbf{R}_{m,z}^{AP}\mathbf{A}^\mathrm{H} + \sigma_{\tau_{AP}}^2\mathbf{I}_{ \mathrm{N_{RF}^2}})^{-1}\mathbf{A}\mathbf{R}_{m,z}^{AP}\label{former}\\
		&=((\mathbf{R}_{m,z}^{AP})^{-1} + \rho_{\tau_{AP}} \sigma_{\tau_{AP}}^{-2}\mathbf{A}^\mathrm{H}\mathbf{A})^{-1}\label{latter}.
			\end{align}
 \end{subequations}

  Therefore, the channel estimation error between the $m$-th T-AP and the $z$-th R-AP is defined by
	\begin{align}
		\text{MSE}_{m,z} \triangleq \operatorname{Tr}\big[((\mathbf{R}_{m,z}^{AP})^{-1} + \rho_{\tau_{AP}} \sigma_{\tau_{AP}}^{-2}\mathbf{A}^\mathrm{H}\mathbf{A})^{-1}\big].
	\end{align}

Thus, we propose an objective function that minimizes the channel estimation error $\text{MSE}_{m,z}$ to design the optimal coupling matrix $\mathbf{A}$ which is formulated as follows
	\begin{subequations}
		\begin{alignat}{2}
			&\min _{\mathbf{A}} \quad& & \text{MSE}_{m,z}\label{mubiao1} \\
			& \;\; \text { s.t. }
			& \quad & \mathbf{A}\in\mathbb{C}^{ \mathrm{N_{RF}^2 }\times  \mathrm{N_{AP}^2}},  \quad  \label{yueshu1_epi} \\
			&&& \operatorname{Tr}(\mathbf{A}\mathbf{A}^\mathrm{H})\leq  \mathrm{N_{RF}^2},  \quad  \label{yueshu2_epi}
		\end{alignat}
	\end{subequations}
where \eqref{yueshu1_epi} and \eqref{yueshu2_epi} denote the dimensional constraint and maximum power limitation of the coupling matrix $\mathbf{A}$, respectively. The singular value decomposition and eigenvalue decomposition for $\mathbf{A}$ and $\mathbf{R}_{m,z}^{AP}$ can be carried out as $\mathbf{A} = \mathbf{U}_A \mathbf{\Sigma}_A \mathbf{V}_A^\mathrm{H}$ and $\mathbf{R}_{m,z}^{AP} = \mathbf{U}_R \mathbf{\Lambda}_R \mathbf{U}_R^\mathrm{H}$, respectively. According to the solution of similar optimization problem in \cite{fozooni2018hybrid}, if the eigenvectors of $\mathbf{A}^\mathrm{H}\mathbf{A}$ and $(\mathbf{R}_{m,z}^{AP})^{-1}$ are the same, the effect of minimizing the channel estimation error can be achieved. Further, based on the assumption that $\mathbf{V}_A = \mathbf{U}_R$, the optimization problem (\ref{mubiao1}) can be rewritten as follows
	\begin{subequations}
		\begin{alignat}{2}
			&\min _{\mathbf{\Sigma}_A} \quad& &\operatorname{Tr}\bigg(\big(\mathbf{\Lambda}_R^{-1} + \rho_{\tau_{AP}} \sigma_{\tau_{AP}}^{-2}\mathbf{\Sigma}_A^\mathrm{H}\mathbf{\Sigma}_A\big)^{-1}\bigg)\label{mubiao2} \\
			& \;\; \text { s.t. }
			& \quad & \mathbf{\Sigma}_A\in\mathbb{C}^{ \mathrm{N_{RF}^2} \times  \mathrm{N_{AP}^2}},  \quad  \label{} \\
			&&& \operatorname{Tr}(\mathbf{\Sigma}_A^\mathrm{H}\mathbf{\Sigma}_A)\leq \mathrm{N_{RF}^2}. \quad  \label{}
		\end{alignat}
	\end{subequations}

Then through the following definitions that $x_i$ represents the $i$-th largest eigenvalue of $\mathbf{A}^\mathrm{H}\mathbf{A}$ and $y_i$ denotes the $i$-th smallest eigenvalue of $(\mathbf{R}_{m,z}^{AP})^{-1}$, the above optimization problem can be further simplified into a typical water-filling problem which can be reformulated by
	\begin{subequations}
		\begin{alignat}{2}
			&\min _{x_i} \quad& &\sum_{i=1}^{ \mathrm{N_{AP}^2}} \bigg(\frac{1}{y_i + \rho_{\tau_{AP}}\sigma_{\tau_{AP}}^{-2} x_i}\bigg)\label{mubiao3} \\
			& \;\; \text { s.t. }
			& \quad & \sum_{i=1}^{ \mathrm{N_{AP}^2}} x_i\leq \mathrm{N_{RF}^2},  \quad  \label{} \\
			&&& x_i\geq 0,  \quad \mathrm{for}\quad i=1,\cdots, \mathrm{N_{RF}^2}, \label{}\\
			&&& x_i= 0,  \quad \mathrm{for}\quad i= \mathrm{N_{RF}^2}+1,\cdots, \mathrm{N_{AP}^2}. \label{}
		\end{alignat}
	\end{subequations}
 
According to Karush-Kuhn-Tucker (KKT) conditions~\cite{boyd2004convex}, we can get the following expressions \eqref{KKT} on the top of next page, where $v_1$ is a number that needs to satisfy the equation $\sum_{i=1}^{\mathrm{N_{AP}^2} } x_i= \mathrm{N_{RF}^2}$.
 	\begin{figure*}[htbp]
	\hrulefill
	    \begin{align}\label{KKT}
    &\sqrt{\frac{\rho_{\tau_{AP}}\sigma_{\tau_{AP}}^{-2}}{v_1}} = y_i + \rho_{\tau_{AP}}\sigma_{\tau_{AP}}^{-2} x_i\nonumber=\frac{\rho_{\tau_{AP}}\sigma_{\tau_{AP}}^{-2} \mathrm{N_{RF}^2} + \displaystyle\sum_{i=1}^{\mathrm{{N_{RF}^{2}}^{'}}} y_i}{\mathrm{N_{RF}^{2}}^{'}},\\
    &x_i = \frac{\sigma_{\tau_{AP}}^2}{\rho_{\tau_{AP}}}\bigg(\frac{\rho_{\tau_{AP}}\sigma_{\tau_{AP}}^{-2}\mathrm{N_{RF}^2} + \displaystyle \sum_{i=1}^{\mathrm{N_{RF}^{2}}^{'}} y_i}{\mathrm{N_{RF}^{2}}^{'}}- y_i \bigg), \nonumber \qquad \mathrm{for} \quad i=1,\cdots,\mathrm{N_{RF}^{2}}^{'},\nonumber \\
    &x_i = 0,\qquad \mathrm{for} \quad i={\mathrm{N_{RF}^{2}}}^{'},\cdots, \mathrm{N_{AP}^2}.
    	\end{align}
     \end{figure*}
Therefore, the optimal coupling matrix $\mathbf{A}$ is finally designed as
\setcounter{equation}{6}
	\begin{align}\label{}
\mathbf{A}=\mathbf{U}_A\mathbf{\Sigma}_A\mathbf{U}_R^\mathrm{H},
	\end{align}
where the main diagonal elements of $\mathbf{\Sigma}_A$ are $[\sqrt{x_1},\sqrt{x_2},\cdots,\sqrt{x_{\mathrm{N_{RF}^{2}}^{'}}},0,\cdots,0]^\mathrm{T}$ with its dimension $ \mathrm{N_{RF}^2}\times \mathrm{N_{AP}^2}$. In addition, since $\mathbf{U}_A$ does not participate in the optimization process of the above problem, we assume that $\mathbf{U}_A=\mathbf{I}_{\mathrm{N_{RF}^2}}$, and $\mathbf{A}$ is further designed as
	\begin{align}\label{optimal}
		\mathbf{A}=\mathbf{\Sigma}_A\mathbf{U}_R^\mathrm{H}.
	\end{align}
	
Although we can obtain the design of the optimal coupling matrix $\mathbf{A}$ through formula \eqref{optimal}, we could not get the coupling matrix information between the transmitter and the receiver in the actual system, so we need to solve the practical RF matrix at both the transmitter and receiver to approximate $\mathbf{A}$ maximally.

Then we propose the following optimization problem to solve the optimal coupling matrix $\mathbf{W}_m^\mathrm{T}$ and $\mathbf{U}_{z}^{\mathrm{H}}$
    	\begin{subequations}
		\begin{alignat}{2}
			&\mathop{\arg \min}_{\mathbf{W}_m^\mathrm{T},\mathbf{U}_z^\mathrm{H}} \quad& &\Vert \mathbf{A}-\mathbf{W}_{m}^\mathrm{T} \otimes \mathbf{U}_{z}^{\mathrm{H}} \Vert_F\\
			& \;\; \text { s.t. }
			& \quad &\mathbf{U}_{z}\in\mathbb{C}^{\mathrm{N_{AP}} \times \mathrm{N_{RF}}},  \quad \label{ch5_youhua1} \\
			&&& \mathbf{W}_{m}\in\mathbb{C}^{ \mathrm{N_{AP}} \times  \mathrm{N_{RF}}}, \label{ch5_youhua2}
		\end{alignat}
	\end{subequations}
where \eqref{ch5_youhua1} and \eqref{ch5_youhua2} represent the dimensional constraints of RF matrices at the receiver and the transmitter respectively. After permuting the shape of the matrix $\mathbf{A}$, the above optimization problem can be rewritten as 
    	\begin{subequations}
		\begin{alignat}{2}
			&\mathop{\arg \min} _{\mathbf{W}_m^\mathrm{T},\mathbf{U}_z^\mathrm{H}} \quad& &\Vert \tilde{\mathbf{A}}-\mathrm{vec}(\mathbf{W}_{m}^\mathrm{T}) \mathrm{vec}(\mathbf{U}_{z}^{\mathrm{H}})^\mathrm{T} \Vert_F\\
			& \;\; \text { s.t. }
			& \quad &\mathbf{U}_{z}\in\mathbb{C}^{ \mathrm{N_{AP}} \times \mathrm{N_{RF}}},  \quad \label{ch5_youhua11} \\
			&&& \mathbf{W}_{m}\in\mathbb{C}^{\mathrm{N_{AP}} \times \mathrm{N_{RF}}}, \label{ch5_youhua22}
		\end{alignat}
	\end{subequations}
where the design of $\mathbf{\tilde{A}}$ is referred to the approach shown in \cite{van1993approximation}. Further, the above optimization problem can be solved according to the Eckhart-Young theorem. To be more specific, let $\mathbf{\tilde{A}} = \sum_r \sigma_r \mathbf{u}_r \mathbf{v}_r^\mathrm{T}$ denotes the singular value decomposition of matrix $\mathbf{\tilde{A}}$ in which $\sigma_r$ is the $r$-th largest singular value, $\mathbf{u}_r$ and $\mathbf{v}_r$ are the corresponding left singular vector and right singular vector, respectively. Eventually, the optimal solutions of $\mathbf{W}_m^\mathrm{T}$ and $\mathbf{U}_{z}^{\mathrm{H}}$ for the best approximation to $\mathbf{A}$ are designed as follows
    	\begin{subequations}
		\begin{alignat}{2}
			&\mathrm{vec}(\mathbf{W}_{m}^\mathrm{T}) = \sqrt{\sigma_1}\mathbf{u}_1,\\
			&\mathrm{vec}(\mathbf{U}_{z}^{\mathrm{H}}) = \sqrt{\sigma_1}\mathbf{v}_1.
		\end{alignat}
	\end{subequations}



\end{appendices}
	
	%

	

\begin{thebibliography}{10}
\providecommand{\url}[1]{#1}
\csname url@samestyle\endcsname
\providecommand{\newblock}{\relax}
\providecommand{\bibinfo}[2]{#2}
\providecommand{\BIBentrySTDinterwordspacing}{\spaceskip=0pt\relax}
\providecommand{\BIBentryALTinterwordstretchfactor}{4}
\providecommand{\BIBentryALTinterwordspacing}{\spaceskip=\fontdimen2\font plus
\BIBentryALTinterwordstretchfactor\fontdimen3\font minus
  \fontdimen4\font\relax}
\providecommand{\BIBforeignlanguage}[2]{{%
\expandafter\ifx\csname l@#1\endcsname\relax
\typeout{** WARNING: IEEEtran.bst: No hyphenation pattern has been}%
\typeout{** loaded for the language `#1'. Using the pattern for}%
\typeout{** the default language instead.}%
\else
\language=\csname l@#1\endcsname
\fi
#2}}
\providecommand{\BIBdecl}{\relax}
\BIBdecl

\bibitem{ngo2017cell}
H.~Q. Ngo, A.~Ashikhmin, H.~Yang, E.~G. Larsson, and T.~L. Marzetta,
  ``Cell-free massive {MIMO} versus small cells,'' \emph{IEEE Trans. Wirel.
  Commun.}, vol.~16, no.~3, pp. 1834--1850, Mar. 2017.

\bibitem{9810259}
J.~Kassam, D.~Castanheira, A.~Silva, R.~Dinis, and A.~Gameiro, ``Distributed
  hybrid equalization for cooperative millimeter-wave cell-free massive
  {MIMO},'' \emph{IEEE Trans. Commun.}, vol.~70, no.~8, pp. 5300--5316, Aug.
  2022.

\bibitem{8676377}
M.~Alonzo, S.~Buzzi, A.~Zappone, and C.~D’Elia, ``Energy-efficient power
  control in cell-free and user-centric massive {MIMO} at millimeter wave,''
  \emph{IEEE Trans. Green Commun. Netw.}, vol.~3, no.~3, pp. 651--663, Sept.
  2019.

\bibitem{femenias2019cell}
G.~Femenias and F.~Riera-Palou, ``Cell-free millimeter-wave massive {MIMO}
  systems with limited fronthaul capacity,'' \emph{{IEEE} Access}, vol.~7, pp.
  44\,596--44\,612, Mar. 2019.

\bibitem{8815888}
Y.~Jin, J.~Zhang, S.~Jin, and B.~Ai, ``Channel estimation for cell-free mmwave
  massive {MIMO} through deep learning,'' \emph{IEEE Trans. Veh. Technol.},
  vol.~68, no.~10, pp. 10\,325--10\,329, Oct. 2019.

\bibitem{9786576}
U.~Demirhan and A.~Alkhateeb, ``Enabling cell-free massive {MIMO} systems with
  wireless millimeter wave fronthaul,'' \emph{IEEE Trans. Wirel. Commun.},
  vol.~21, no.~11, pp. 9482--9496, Nov. 2022.

\bibitem{9609088}
J.~Wang, B.~Wang, J.~Fang, and H.~Li, ``Millimeter wave cell-free massive
  {MIMO} systems: Joint beamforming and {AP}-user association,'' \emph{IEEE
  Wireless Commun. Lett.}, vol.~11, no.~2, pp. 298--302, Feb. 2022.

\bibitem{9947028}
G.~Liu, H.~Deng, X.~Qian, W.~Zhang, and H.~Dong, ``Joint pilot and data power
  control for cell-free massive {MIMO} {IoT} systems,'' \emph{IEEE Sens. J.},
  vol.~22, no.~24, pp. 24\,647--24\,657, Dec. 2022.

\bibitem{9915296}
R.~Jia, K.~Xu, X.~Xia, Z.~Shen, W.~Xie, and N.~Sha, ``Time-sequential
  cooperative localization for moving sensor in millimeter-wave cell-free
  massive {MIMO} system,'' \emph{IEEE Sens. J.}, vol.~22, no.~22, pp.
  22\,008--22\,019, Nov. 2022.

\bibitem{rappaport2013millimeter}
T.~S. Rappaport, S.~Sun, R.~Mayzus, H.~Zhao, Y.~Azar, K.~Wang, G.~N. Wong,
  J.~K. Schulz, M.~Samimi, and F.~Gutierrez, ``Millimeter wave mobile
  communications for {5G} cellular: It will work!'' \emph{IEEE Access}, vol.~1,
  pp. 335--349, May. 2013.

\bibitem{8464682}
S.~Sun, T.~S. Rappaport, M.~Shafi, and H.~Tataria, ``Analytical framework of
  hybrid beamforming in multi-cell millimeter-wave systems,'' \emph{IEEE Trans.
  Wirel. Commun.}, vol.~17, no.~11, pp. 7528--7543, Nov. 2018.

\bibitem{heath2016overview}
R.~W. Heath, N.~Gonzalez-Prelcic, S.~Rangan, W.~Roh, and A.~M. Sayeed, ``An
  overview of signal processing techniques for millimeter wave {MIMO}
  systems,'' \emph{{IEEE} J. Sel. Topics Signal Process.}, vol.~10, no.~3, pp.
  436--453, Apr. 2016.

\bibitem{ni2017near}
W.~Ni, X.~Dong, and W.-S. Lu, ``Near-optimal hybrid processing for massive
  {MIMO} systems via matrix decomposition,'' \emph{IEEE Trans. Signal
  Process.}, vol.~65, no.~15, pp. 3922--3933, Aug. 2017.

\bibitem{sabharwal2014band}
A.~Sabharwal, P.~Schniter, D.~Guo, D.~W. Bliss, S.~Rangarajan, and R.~Wichman,
  ``In-band full-duplex wireless: Challenges and opportunities,'' \emph{{IEEE}
  J. Sel. Areas Commun.}, vol.~32, no.~9, pp. 1637--1652, Sept. 2014.

\bibitem{nguyen2020spectral}
H.~V. Nguyen, V.-D. Nguyen, O.~A. Dobre, S.~K. Sharma, S.~Chatzinotas,
  B.~Ottersten, and O.-S. Shin, ``On the spectral and energy efficiencies of
  full-duplex cell-free massive {MIMO},'' \emph{IEEE J. Sel. Areas Commun.},
  vol.~38, no.~8, pp. 1698--1718, Aug. 2020.

\bibitem{wang2019performance}
D.~Wang, M.~Wang, P.~Zhu, J.~Li, J.~Wang, and X.~You, ``Performance of
  network-assisted full-duplex for cell-free massive {MIMO},'' \emph{{IEEE}
  Trans. Commun.}, vol.~68, no.~3, pp. 1464--1478, Mar. 2020.

\bibitem{mohammadi2023network}
M.~Mohammadi, T.~T. Vu, H.~Q. Ngo, and M.~Matthaiou, ``Network-assisted
  full-duplex cell-free massive {MIMO}: {Spectral} and energy efficiencies,''
  \emph{{IEEE} J. Sel. Areas Commun.}, Jun. 2023.

\bibitem{10048919}
S.~Fukue, G.~T. Freitas~de Abreu, and K.~Ishibashi, ``Network-assisted
  full-duplex millimeter-wave cell-free massive {MIMO} with localization-aided
  inter-user channel estimation,'' in \emph{2023 International Conference on
  Information Networking (ICOIN)}, Bangkok, Thailand, Jan. 2023, pp. 13--18.

\bibitem{xia2020joint}
X.~Xia, P.~Zhu, J.~Li, D.~Wang, Y.~Xin, and X.~You, ``Joint sparse beamforming
  and power control for a large-scale {DAS} with network-assisted full
  duplex,'' \emph{IEEE Trans. Veh. Technol.}, vol.~69, no.~7, pp. 7569--7582,
  Jul. 2020.

\bibitem{li2020network}
J.~Li, Q.~Lv, P.~Zhu, D.~Wang, J.~Wang, and X.~You, ``Network-assisted
  full-duplex distributed massive {MIMO} systems with beamforming training
  based {CSI} estimation,'' \emph{{IEEE} Trans. Wireless Commun.}, vol.~20,
  no.~4, pp. 2190--2204, 2020.

\bibitem{li2023network}
\BIBentryALTinterwordspacing
J.~Li, Q.~Fan, Y.~Zhang, P.~Zhu, D.~Wang, H.~Wu, and X.~You, ``Network-assisted
  full-duplex cell-free {mmWave} massive {MIMO} systems with {DAC} quantization
  and fronthaul compression,'' \emph{accepted by China Communications}, 2023.
  [Online]. Available: \url{https://arxiv.org/abs/2302.05571}
\BIBentrySTDinterwordspacing

\bibitem{fan2022maddpg}
Q.~Fan, Y.~Zhang, Z.~Wang, J.~Li, P.~Zhu, and D.~Wang, ``{MADDPG}-based power
  allocation algorithm for network-assisted full-duplex cell-free {MmWave}
  massive {MIMO} systems with {DAC} quantization,'' in \emph{14th International
  Conference on Wireless Communications and Signal Processing (WCSP)}, Nanjing,
  China, Nov. 2022, pp. 556--561.

\bibitem{zhang2020hybrid}
Y.~Zhang, D.~Wang, Y.~Huo, X.~Dong, and X.~You, ``Hybrid beamforming design for
  {mmWave OFDM} distributed antenna systems,'' \emph{Sci. China Inf. Sci.},
  vol.~63, no.~9, pp. 23\,011--230\,112, Jul. 2020.

\bibitem{1597555}
M.~Biguesh and A.~Gershman, ``Training-based {MIMO} channel estimation: a study
  of estimator tradeoffs and optimal training signals,'' \emph{IEEE Trans.
  Signal Process.}, vol.~54, no.~3, pp. 884--893, Mar. 2006.

\bibitem{hong2021effect}
S.-E. Hong, ``On the effect of shadowing correlation and pilot assignment on
  hybrid precoding performance for cell-free mmwave massive {MIMO} {UDN}
  system,'' \emph{ICT Express}, vol.~7, no.~1, pp. 60--70, Mar. 2021.

\bibitem{adhikary2014joint}
A.~Adhikary, E.~Al~Safadi, M.~K. Samimi, R.~Wang, G.~Caire, T.~S. Rappaport,
  and A.~F. Molisch, ``Joint spatial division and multiplexing for {mm-wave}
  channels,'' \emph{IEEE J. Sel. Areas Commun.}, vol.~32, no.~6, pp.
  1239--1255, 2014.

\bibitem{park2018spatial}
S.~Park and R.~W. Heath, ``Spatial channel covariance estimation for the hybrid
  {MIMO} architecture: A compressive sensing-based approach,'' \emph{IEEE
  Trans. Wirel. Commun.}, vol.~17, no.~12, pp. 8047--8062, 2018.

\bibitem{kim2021performance}
I.-s. Kim and J.~Choi, ``Performance of cell-free {MmWave} massive {MIMO}
  systems with fronthaul compression and {DAC} quantization,'' in \emph{2021
  IEEE Wireless Communications and Networking Conference Workshops (WCNCW)},
  Nanjing, China, Mar. 2021, pp. 1--6.

\bibitem{5898372}
J.~Jose, A.~Ashikhmin, T.~L. Marzetta, and S.~Vishwanath, ``Pilot contamination
  and precoding in multi-cell {TDD} systems,'' \emph{IEEE Trans. Wirel.
  Commun.}, vol.~10, no.~8, pp. 2640--2651, Aug. 2011.

\bibitem{8247283}
C.~Pan, H.~Mehrpouyan, Y.~Liu, M.~Elkashlan, and N.~Arumugam, ``Joint pilot
  allocation and robust transmission design for ultra-dense user-centric {TDD}
  {C-RAN} with imperfect {CSI},'' \emph{IEEE Trans. Wirel. Commun.}, vol.~17,
  no.~3, pp. 2038--2053, Mar. 2018.

\bibitem{lei2021maddpg}
W.~Lei, H.~Wen, J.~Wu, and W.~Hou, ``{MADDPG}-based security situational
  awareness for smart grid with intelligent edge,'' \emph{Applied Sciences},
  vol.~11, no.~7, p. 3101, Mar. 2021.

\bibitem{fredj2022distributed}
F.~Fredj, Y.~Al-Eryani, S.~Maghsudi, M.~Akrout, and E.~Hossain, ``Distributed
  beamforming techniques for cell-free wireless networks using deep
  reinforcement learning,'' \emph{IEEE Trans. Cogn. Commun. Netw.}, vol.~8,
  no.~2, pp. 1186--1201, Jun. 2022.

\bibitem{lowe2017multi}
R.~Lowe, Y.~Wu, A.~Tamar, J.~Harb, P.~Abbeel, and I.~Mordatch, ``Multi-agent
  actor-critic for mixed cooperative-competitive environments,'' in
  \emph{Proceedings of the 31st International Conference on Neural Information
  Processing Systems}.\hskip 1em plus 0.5em minus 0.4em\relax Red Hook, NY,
  USA: Curran Associates Inc., Dec. 2017, pp. 6382--–6393.

\bibitem{ackermann2019reducing}
\BIBentryALTinterwordspacing
J.~J. Ackermann, V.~Gabler, T.~Osa, and M.~Sugiyama, ``Reducing overestimation
  bias in multi-agent domains using double centralized critics,''
  \emph{accepted for the Deep RL Workshop at NeurIPS 2019}, 2019. [Online].
  Available: \url{https://api.semanticscholar.org/CorpusID:203642167}
\BIBentrySTDinterwordspacing

\bibitem{zhao2022multi}
N.~Zhao, Z.~Ye, Y.~Pei, Y.-C. Liang, and D.~Niyato, ``Multi-agent deep
  reinforcement learning for task offloading in {UAV}-assisted mobile edge
  computing,'' \emph{IEEE Trans. Wirel. Commun.}, vol.~21, no.~9, pp.
  6949--6960, Sept. 2022.

\bibitem{fozooni2018hybrid}
M.~Fozooni, H.~Q. Ngo, M.~Matthaiou, S.~Jin, and G.~C. Alexandropoulos,
  ``Hybrid processing design for multipair massive {MIMO} relaying with channel
  spatial correlation,'' \emph{IEEE Trans. Commun.}, vol.~67, no.~1, pp.
  107--123, Jan. 2019.

\bibitem{boyd2004convex}
S.~Boyd, S.~P. Boyd, and L.~Vandenberghe, \emph{Convex optimization}.\hskip 1em
  plus 0.5em minus 0.4em\relax Cambridge university press, 2004.

\bibitem{van1993approximation}
\BIBentryALTinterwordspacing
C.~F. Van~Loan and N.~Pitsianis, \emph{Approximation with Kronecker
  Products}.\hskip 1em plus 0.5em minus 0.4em\relax Dordrecht: Springer
  Netherlands, 1993, pp. 293--314. [Online]. Available:
  \url{https://doi.org/10.1007/978-94-015-8196-7_17}
\BIBentrySTDinterwordspacing

\end{thebibliography}

	
	\ifCLASSOPTIONcaptionsoff
	\newpage
	\fi
	
\end{document}